\documentclass{lmcs} 
\pdfoutput=1

\usepackage{lastpage}
\lmcsdoi{16}{1}{13}
\lmcsheading{}{\pageref{LastPage}}{}{}%
{May~01,~2019}{Feb.~13,~2020}{}

\keywords{Overlap algebras, complete Boolean algebras, constructive mathematics, frames and locales}

\usepackage{hyperref}
\usepackage[english]{babel}
\usepackage[latin1]{inputenc}
\usepackage{amsmath}
\usepackage{amssymb}
\usepackage{latexsym}
\usepackage{amsthm}
\usepackage{indentfirst}
\usepackage{bussproofs}
\usepackage{tikz-cd}
\usepackage[all]{xy}

\theoremstyle{plain}

\theoremstyle{plain}

\newtheorem{lemma}[thm]{Lemma}
\newtheorem{proposition}[thm]{Proposition}
\newtheorem{corollary}[thm]{Corollary}

\theoremstyle{definition}
\newtheorem{definition}[thm]{Definition}
\newtheorem{remark}[thm]{Remark}

\def\sub{\subseteq}
\def\olap{>\mkern-13.5mu <}
\def\P{\mathrm{Pow}}
\def\Pos{\mathrm{Pos}}
\def\cl{\mathrm{cl}}
\def\inte{\mathrm{int}}

\begin{document}

\title[Overlap Algebras]{Overlap Algebras:\\
a Constructive Look at Complete Boolean Algebras} 
\titlecomment{{\lsuper*}This project has received funding from the European Union's Horizon 2020 research and innovation programme under the Marie Sk\l{}odowska-Curie grant agreement No 731143}

\author[F.~Ciraulo]{Francesco Ciraulo}
\address{Department of Mathematics, University of Padova}
\email{ciraulo@math.unipd.it} 

\author[M.~Contente]{Michele Contente}
\address{Scuola Normale Superiore, Pisa}	
\email{michele.contente@sns.it}

\begin{abstract}
  \noindent The notion of a complete Boolean algebra, although completely legitimate in constructive mathematics, fails to capture some natural structures such as the lattice of subsets of a given set. Sambin's notion of an overlap algebra, although classically equivalent to that of a complete Boolean algebra, has powersets and other natural structures as instances.\\ In this paper we study the category of overlap algebras as an extension of the category of sets and relations, and we establish some basic facts about mono-epi-isomorphisms and (co)limits; here a morphism is a symmetrizable function (with classical logic this is just a function which preserves joins). Then we specialize to the case of morphisms which preserve also finite meets: classically, this is the usual category of complete Boolean algebras. Finally, we connect overlap algebras with locales, and their morphisms with open maps between locales, thus obtaining constructive versions of some results about Boolean locales.
\end{abstract}

\maketitle

\section*{Introduction}

A typical phenomenon in constructive mathematics is the split of classical notions: several definitions which are equivalent over classical logic can become deeply different over intuitionistic logic. In this paper we study an alternative way to define complete Boolean algebras, as proposed by Giovanni Sambin \cite{6,3z} who named them \emph{overlap algebras}. There are some facts which make overlap algebras interesting, we believe, from the constructive point of view; for instance, the collection of all subsets of a set is an overlap algebra, actually an atomistic one, although it cannot ever be Boolean (apart from the trivial case of the power of the empty set). 

Roughly speaking, an overlap algebra is a complete lattice (actually a complete Heyting algebra) equipped with a new primitive relation, the overlap relation $\olap$. The intended meaning of $x\olap y$ is that the infimum $x\wedge y$ is ``inhabited''. The distinction between $\emph{inhabited}$ and $\emph{non-empty}$ is enlightening. Indeed, constructively $\exists x(x\in X)$ is a stronger statement than $\neg \forall x \neg(x\in X)$. In an arbitrary complete Heyting algebra we can use $x\neq 0$ as the algebraic counterpart of the set-theoretic $X\neq\emptyset$, but there is no way to express the positive statement of being inhabited. Overlap algebras give an elegant answer to this question. 

Overlap algebras and complete Boolean algebras have just one element in common, the trivial one-element algebra, unless classical logic is assumed, in which case the two notions coincide.

In this paper we investigate two natural notions of morphism between overlap algebras which are both inspired by the powerset construction. First we study the category {\bf OA} as originally introduced by Sambin; {\bf OA} is a dagger category which contains the category {\bf Rel} of sets and relations as a full subcategory; classically, {\bf OA} is the category of complete Boolean algebras and join preserving maps. In particular, we characterize monomorphisms, epimorphisms, and isomorphisms in {\bf OA}, and we establish some basic facts about limits and colimits. 
Then we specialize to the subcategory {\bf OFrm} whose arrows preserve also finite meets. This is a subcategory of {\bf Frm}, the category of frames; morphisms in {\bf OFrm} correspond to open maps in the sense of locale theory. Classically, {\bf OFrm} is the usual category of complete Boolean algebras; we are therefore able to obtain new constructive versions of some standard results about Boolean locales.

If not otherwise stated, we assume to work over intutionistic logic and without choice. In other words, we understand ``constructive'' as ``topos-valid''. In particular, we shall usually think of powersets as perfectly legitimate sets even if we shall make some remark on predicativity in the last section: it is a fact that most of the paper could be adapted to a predicative framework (such as that presented in \cite{4}) by a systematic use of ``bases".

Part of the material in the present paper appeared in the second author's master thesis~\cite{3a}.

\section{Atomic Heyting algebras}

Given a set $X$, its subsets form a complete Heyting algebra $\P(X)$ with respect to the usual set-theoretic operations. Here we write $-Y$ for the pseudo-complement of the subset $Y\subseteq X$. We write $\Omega$ for $\P(1)$, where $1=\{0\}$, which we interpret as the type of truth values. It is well-known that the following statements are equivalent:
\begin{itemize}
    \item the Law of Excluded Middle (LEM);
    \item $(\forall p\in\Omega)(p\cup-p=1)$, that is, $\Omega\cong 2=\{0,1\}$;
    \item $\P(X)$ is a complete Boolean algebra for every $X$.
\end{itemize}

Classically, powersets are precisely the atomic Boolean algebras (this means that every element is the join of the atoms below it, where an atom is a minimal non-zero element). In other words, a Boolean algebra is atomic
 if and only if it is isomorphic to the powerset of the set of its atoms.
The problem of finding a constructive characterization of powersets is related to the problem of finding a suitable algebraization of the notion of a singleton. Apparently, none of the first-order (in the sense of the language of lattices) attempts to define when $a\in L$ is an atom is satisfactory from an intuitionistic point of view; consider, for instance, the following.
\begin{eqnarray}
a\neq 0 & \land & (\forall x\in L)(x\neq 0\land x\leq a\Rightarrow x=a) 
\label{eq def atom 1}\\
a\neq 0 & \land & (\forall x\in L)(x\leq a\Rightarrow x=0\vee x=a) 
\label{eq def atom 2}\\
a\neq 0 & \land & (\forall x\in L)(x< a\Rightarrow x=0) 
\label{eq def atom 3}\\
a\neq 0 & \land & \neg(\exists x\in L)(x\neq 0\land x< a) 
\label{eq def atom 4}
\end{eqnarray}
Indeed, when applied to the case $L=\P(X)$, singletons cannot be proven to be atoms in the sense of \eqref{eq def atom 1} or \eqref{eq def atom 2},  and it is impossible to prove that every subset satisfying \eqref{eq def atom 3} or \eqref{eq def atom 4} is a singleton, although a singleton satisfies \eqref{eq def atom 3} and \eqref{eq def atom 4}. All this comes up clear already when inspecting the case $L=\Omega$: its only singleton $1=\{0\}$ satisfies \eqref{eq def atom 1} or \eqref{eq def atom 2} if and only if LEM holds; and LEM is equivalent to assuming that $1$ is the only $a\in\Omega$ satisfying \eqref{eq def atom 3} or \eqref{eq def atom 4}.
A possible well-known solution is to adopt a second-order definition, as follows. 

\begin{definition}
Given a poset $(L,\leq)$, we say that $a\in L$ is an {\bf atom} if the poset ${\downarrow a}$ = $\{x\in L\ |\ x\leq a\}$ is order-isomoprhic to $\Omega$. And $(L,\leq)$ is {\bf atomic} if the join of all atoms below a given $x$ exists and equals $x$, for every $x\in L$.
\end{definition}

If $L=\P(X)$, then ${\downarrow\{x\}}$ = $\P(\{x\})$ is isomorphic to $\Omega$ = $\P(\{0\})$, for all $x\in X$. So every singleton is an atom and hence every element is a join of atoms. Actually, we can show that the atoms in $\P(X)$ are precisely the singletons. Let $Y$ be an atom, that is, ${\downarrow Y}=\P(Y)\cong\Omega$; and let $\varphi:\P(Y)\to\Omega$ be an order isomorphism (which then preserves joins and meets). Then $1$ = $\varphi(Y)$ = $\varphi(\bigcup_{x\in Y}\{x\})$ = $\bigvee_{x\in Y}\varphi(\{x\})$. So $Y$ is inhabited, actually there is some $x\in Y$ with $\varphi(\{x\})$ = 1, and hence $Y$ = $\varphi^{-1}(1)$ = $\{x\}$.

\begin{proposition}\label{prop atomic frames}
A frame $L$ is atomic if and only if it is order isomorphic to $\P(X)$, where $X$ is the set of atoms of $L$.
\end{proposition}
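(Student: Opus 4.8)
The plan is to prove Proposition~\ref{prop atomic frames} by establishing both directions, leveraging the fact already proven in the text that the atoms of $\P(X)$ are precisely its singletons.

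For the easy direction, if $L$ is order isomorphic to $\P(X)$ for some set $X$, then atomicity transfers across the isomorphism. Indeed, the text has already shown that $\P(X)$ is atomic (every element is the union of the singletons it contains, and singletons are atoms). Since being an atom and being a join of atoms are order-theoretic notions, any order isomorphism preserves them, so $L$ is atomic as well. I would dispatch this direction in a sentence or two.

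The substance lies in the forward direction: assuming $L$ is an atomic frame, I want to build an order isomorphism between $L$ and $\P(X)$, where $X$ is the set of atoms of $L$. First I would define the candidate map $\varphi:L\to\P(X)$ by sending each $x\in L$ to the set of atoms below it, namely $\varphi(x)=\{a\in X\ |\ a\leq x\}$. This is evidently order preserving, and atomicity says precisely that $x=\bigvee\varphi(x)$ for every $x$, which immediately gives injectivity and order reflection: if $\varphi(x)\subseteq\varphi(y)$ then $x=\bigvee\varphi(x)\leq\bigvee\varphi(y)=y$. The remaining task is surjectivity: given any subset $Y\subseteq X$, I would set $x=\bigvee Y$ and show $\varphi(x)=Y$. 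The inclusion $Y\subseteq\varphi(x)$ is clear; for the reverse, I must show that any atom $a\leq\bigvee Y$ already belongs to $Y$.

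The main obstacle, and the only place where I expect a genuinely constructive argument to be needed, is this last step: showing that an atom below a join of atoms is one of them. This is where the definition of an atom as ${\downarrow a}\cong\Omega$ (rather than one of the failed first-order definitions) must be exploited, together with the frame distributivity law $a\wedge\bigvee Y=\bigvee_{b\in Y}(a\wedge b)$. The key computation is that $a=a\wedge\bigvee Y=\bigvee_{b\in Y}(a\wedge b)$, and since each $a\wedge b\leq a$ lives in ${\downarrow a}\cong\Omega$, I can transport the situation along the isomorphism ${\downarrow a}\cong\Omega$ and argue in $\Omega$ that $1=\bigvee_{b\in Y}\psi(a\wedge b)$ forces some $b\in Y$ with $a\wedge b=a$, i.e. $a\leq b$; atomicity of $b$ (or rather $a\leq b$ with both atoms) then yields $a=b\in Y$. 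This mirrors exactly the singleton argument given just before the proposition, and the careful handling of the inhabited join in $\Omega$ is what keeps the proof constructive.
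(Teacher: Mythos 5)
Your proposal is correct and follows essentially the same route as the paper: the map sending $x$ to its set of atoms, the reduction of surjectivity to showing that an atom $a\leq\bigvee Y$ lies in $Y$, and the key use of frame distributivity together with ${\downarrow a}\cong\Omega$ to extract some $b\in Y$ with $a\leq b$, then $a=b$ since both are atoms. The only cosmetic difference is that you phrase the conclusion as ``order-embedding plus surjective'' where the paper exhibits the explicit inverse $Y\mapsto\bigvee Y$; these are the same argument.
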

\begin{proof}
One direction follows from the discussion above.
As for the other, let us define $f:L\to\P(X)$ to be the function which maps a given $x$ to the set of atoms below it, and let $g:\P(X)\to L$ be the function which maps a set of atoms to its join. The two mapping are clearly monotone. Moreover, $g(f(x))=x$ because $L$ is atomic. It remains to show that $f(g(Y))=Y$ for every $Y\subseteq X$. The inclusion $Y\subseteq f(g(Y))$ is clear. As for the other, we must show that $x\leq\bigvee Y$ implies $x\in Y$ for every $x\in X$. Now $x\leq\bigvee Y$ can be written as $x$ = $x\wedge\bigvee Y$ = $\bigvee\{x\wedge y\ |\ y\in Y\}$. Since $x$ is an atom (that is, ${\downarrow x}$ behaves like $\Omega$), $x\wedge y$ must be $x$ for some $y$. So $x\leq y$. Since $y$ is an atom too, this happens precisely when $x=y$ (there is only one atom in ${\downarrow y}$ $\cong$ $\Omega$ = $\P(1)$).
\end{proof} 

\subsection{The positivity predicate on a frame}

For $X$ a set, the statement ``$X$ is inhabited" is stronger than ``$X\neq\emptyset$", constructively; and the two statements are equivalent for all sets $X$ if and only if LEM holds.
There exists a quite standard way to ``algebraize" the concept of an inhabited set. 

\begin{definition}
Let $L$ be a complete lattice. A unary predicate $\Pos$ on $L$ is a {\bf positivity predicate} if the following conditions hold identically. 
\begin{eqnarray}
\Pos (x)\land (x\leq y)\Rightarrow\Pos (y)\label{eq monotonicity}\\
\Pos (\bigvee X)\Rightarrow(\exists x\in X)\Pos (x)\label{eq splitting}\\
(\Pos (x) \Rightarrow (x \leq y)) \Longrightarrow x \leq y
\label{eq positivity}
\end{eqnarray}
\end{definition}

\noindent It is easy to check that \eqref{eq positivity} can be replaced by
\begin{equation}\label{eq positivity bis}
y\leq\bigvee\{x\in L\ |\ \Pos(x)\land (x\leq y)\}\ .    
\end{equation} By extending the terminology which is used for frames/locales, we call a complete lattice {\bf overt} if it has a positivity predicate. 

It is well-known that if $L$ is overt, then $\Pos$ is equivalent to the second-order predicate $POS$, where $POS(x)$ is $(\forall X\subseteq L)(x\leq\bigvee X\Longrightarrow X\textrm{ is inhabited})$. This has a couple of (almost) immediate consequences. First, the positivity predicate, when it exists, is unique and it is uniquely determined by the ordering. Second, $L$ is overt if and only if $POS$ is a positivity predicate.
Classically, every complete lattice is overt and $\Pos(x)$ is just $x\neq 0$. Constructively, $\Pos(x)$ always implies $x\neq 0$, but not the other way around, in general; and it cannot be proven that every complete lattice is overt.

The notion of overteness for a frame can be characterized in a more categorical fashion. Given a frame $L$, there is a unique frame homomorphism $!:\Omega\to L$ (that is, $\Omega$ is the initial frame, that is, the terminal locale). Then $L$ is overt precisely when $!$ has a left adjoint $\exists_!$ (which happens precisely when $!$ preserves arbitrary meets), in which case $\exists_!=\Pos$. 
 
\subsection{Atoms of an overt frame} 

The positivity predicate $\Pos$ can be used to characterize the atoms.
In the case of a powerset, a singleton is precisely a minimal \emph{inhabited} subset. So the following variation of \eqref{eq def atom 1} is the natural candidate for a first-order definition of an atom:
\begin{equation}\label{eq def atom 1bis}
\Pos(a) \land (\forall x\in L)\big(\Pos(x)\wedge (x\leq a)\Longrightarrow (x=a)\big)\ .
\end{equation}

\begin{proposition}
Let $L$ be an overt complete lattice; $a\in L$ is an atom if and only if $a$ satisfies \eqref{eq def atom 1bis}.
\end{proposition}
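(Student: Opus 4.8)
The plan is to treat the two implications separately, using throughout the fact---recorded just before the statement---that on an overt complete lattice $\Pos$ coincides with the order-theoretic predicate $POS$. Two consequences of this will do most of the work. First, an order isomorphism between overt complete lattices automatically respects $\Pos$, since $POS$ is phrased purely in terms of joins and the order. Second, for any $a\in L$ the restriction of $\Pos$ to the principal ideal ${\downarrow}a$ satisfies \eqref{eq monotonicity}, \eqref{eq splitting} and \eqref{eq positivity} (joins computed in ${\downarrow}a$ agree with those in $L$, since the least upper bound already lies below $a$), so by uniqueness it is \emph{the} positivity predicate of ${\downarrow}a$. I will also use the elementary computation that the positivity predicate of $\Omega=\P(1)$ is ``being equal to the top'': for $p\subseteq 1$, the subset $p$ is inhabited exactly when $0\in p$, i.e.\ when $p=1$.

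For the forward direction, I assume $a$ is an atom and fix an order isomorphism $\varphi\colon{\downarrow}a\to\Omega$. By the two remarks above, $\varphi$ transports the positivity predicate of ${\downarrow}a$ (which is $\Pos$ restricted to ${\downarrow}a$) to that of $\Omega$. Hence for $x\leq a$ one has $\Pos(x)$ iff $\varphi(x)$ is the top of $\Omega$, and since $\varphi$ is a bijection preserving top this holds iff $x=a$. Reading this off at $x=a$ gives $\Pos(a)$, and for general $x\leq a$ it gives precisely the minimality clause of \eqref{eq def atom 1bis}.

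For the converse, I assume \eqref{eq def atom 1bis} and exhibit an explicit order isomorphism, sending each $x\leq a$ to the truth value of $\Pos(x)$ and going back by ``scaling $a$ by a truth value''. Concretely, I set $\varphi(x)=\{0\mid \Pos(x)\}\in\Omega$ and $\psi(p)=\bigvee\{a\mid 0\in p\}\in{\downarrow}a$. Both maps are monotone: the first by \eqref{eq monotonicity}, the second because a larger $p$ enlarges the indexing subsingleton. The composite $\varphi\psi$ is the identity: using \eqref{eq splitting} together with $\Pos(a)$ one checks that $\Pos(\psi(p))$ holds iff $0\in p$. The delicate composite is $\psi\varphi$, where $\psi(\varphi(x))=\bigvee\{a\mid\Pos(x)\}$ must be shown to equal $x$. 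The inequality $\leq$ is immediate from the minimality clause, since whenever $\Pos(x)$ holds we have $x=a$, so $a\leq x$ under that hypothesis. For the reverse inequality $x\leq\bigvee\{a\mid\Pos(x)\}$ I apply the positivity axiom \eqref{eq positivity} with $y=\bigvee\{a\mid\Pos(x)\}$, for which it suffices to prove $\Pos(x)\Rightarrow x\leq y$; and under the hypothesis $\Pos(x)$ the join $y$ is just $a$, which dominates $x$.

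I expect this last inequality to be the main obstacle, or rather the only genuinely non-classical step. Classically one would simply split on whether $\Pos(x)$ holds and be done, whereas constructively the equality $x=\bigvee\{a\mid\Pos(x)\}$ has to be recovered from \eqref{eq positivity} (equivalently \eqref{eq positivity bis}), the axiom encoding that $x$ is the join of its positive part. Everything else---monotonicity of $\varphi$ and $\psi$, the $\varphi\psi$ computation, and the transfer of $\Pos$ along the isomorphism in the forward direction---is routine once the behaviour of $\Pos$ on $\Omega$ and on principal ideals has been pinned down.
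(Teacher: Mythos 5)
Your proof is correct and follows essentially the same route as the paper: the forward direction transports the (order-determined, hence unique) positivity predicate along the isomorphism ${\downarrow}a\cong\Omega$, and the converse exhibits $\Pos$ as an order isomorphism ${\downarrow}a\to\Omega$ with inverse $p\mapsto\bigvee\{x\leq a\mid p\}$, verifying the two composites exactly as you do (splitting for one, the minimality clause plus \eqref{eq positivity} for the other). The only differences are cosmetic, e.g.\ writing the inverse as $\bigvee\{a\mid 0\in p\}$ instead of $\bigvee\{x\leq a\mid p\}$.
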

\begin{proof}
If $L$ is overt and $x\in L$, then also ${\downarrow x}$ is overt with respect to the restriction of $\Pos$.

Let $a$ be an atom, that is, ${\downarrow a}$ $\cong$ $\Omega$. So $\Pos(x)$ becomes ``$x$ is inhabited" under such an isomorphism, and hence \eqref{eq def atom 1bis} is true on ${\downarrow a}$ (because it is true on $\Omega$; recall that the positivity predicate is uniquely determined by the ordering and so has to be preserved by order-isomorphism). 

Conversely, if $a$ satisfies \eqref{eq def atom 1bis}, $\Pos:{\downarrow a}\to\Omega$ is an order-isomorphism whose inverse is $p\mapsto\bigvee\{x\leq a\ |\ p\}$. Indeed, the two mappings are monotone, and $\Pos(\bigvee\{x\leq a\ |\ p\})=p$; moreover, for $b\leq a$, it is $\bigvee\{x\leq a\ |\ \Pos(b)\}\leq b$ because $\Pos(b)$ is just $b=a$ by \eqref{eq def atom 1bis}, and $b\leq\bigvee\{x\leq a\ |\ \Pos(b)\}$ because $\Pos$ is a positivity predicate on ${\downarrow a}$, in particular it satisfies \eqref{eq positivity}.
\end{proof}

As noticed by Giovanni Sambin, \eqref{eq def atom 1bis} is equivalent to the following elegant condition: 
\begin{equation}\label{eq atom Pos}
(\forall x\in L)\big(\Pos(a\wedge x)\Longleftrightarrow(a\leq x)\big)\ .
\end{equation}

\section{Overlap Algebras}

Every complete Boolean algebra is a frame and, classically, every atomic frame (that is, a powerset by proposition \ref{prop atomic frames}) is a complete Boolean algebra. The latter fails constructively; a constructive version can be obtained by replacing complete Boolean algebras by Sambin's {\bf overlap algebras}, as we now see. 

\begin{definition}
An $\emph{overlap-algebra}$ (o-algebra) is an overt frame $L$ such that
\begin{equation}\label{eq prop Pos}  
 (\forall z\in L ) (\Pos (z \wedge x) \Rightarrow \Pos (z \wedge y)) \Longrightarrow x \leq y
\end{equation} for all $x,y\in L$.
\end{definition}
The motivating example is given by powersets, where $\Pos(x)$ means ``$x$ is inhabited". To see that \eqref{eq prop Pos} holds in this case it is sufficient to make $z$ vary over singletons. Note that for $p\in\Omega$ the statement $\Pos(p)$ is equivalent to $p=1$. 

Note that a frame $L$ is an o-algebra if and only if there exists a unary predicate $\Pos$ on $L$ such that \eqref{eq monotonicity}, \eqref{eq splitting} and \eqref{eq prop Pos} hold. Indeed \eqref{eq positivity} follows from \eqref{eq monotonicity} and \eqref{eq prop Pos}: assume $\Pos(x)\Rightarrow (x\leq y)$; if $\Pos(z\wedge x)$, then $\Pos(x)$ and so $x\leq y$ by assumption; therefore $z\wedge x\leq z\wedge y$, and hence $\Pos(z\wedge y)$.

\begin{proposition}
Classically, o-algebras and complete Boolean algebras coincide.
\end{proposition}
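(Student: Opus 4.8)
The plan is to prove the two inclusions separately, working throughout with the classical reading of the positivity predicate recorded earlier in the excerpt, namely that $\Pos(x)$ is just $x\neq 0$, so that every complete lattice is automatically overt and an o-algebra is simply a frame satisfying \eqref{eq prop Pos}. Under this reading the implication \eqref{eq prop Pos} says: whenever, for every $z$, $z\wedge x\neq 0$ entails $z\wedge y\neq 0$, then $x\leq y$. The two things to establish are that every complete Boolean algebra is an o-algebra and, conversely, that every o-algebra is a complete Boolean algebra.

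For the first inclusion I would start from a complete Boolean algebra $B$. It is a frame, since the infinite distributive law holds in any complete Boolean algebra, and, being a complete lattice, it is overt classically; so only \eqref{eq prop Pos} remains to be checked. Given $x,y$ satisfying the antecedent of \eqref{eq prop Pos}, I would instantiate the universally quantified $z$ at the complement $\neg y$: since $\neg y\wedge y=0$ we have $\neg\Pos(\neg y\wedge y)$, so the contrapositive of the antecedent forces $\neg y\wedge x=0$, which in a Boolean algebra is exactly $x\leq y$. This direction is a one-line instantiation and carries no real difficulty.

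For the converse I would take an o-algebra $L$, which is in particular a frame and hence a complete Heyting algebra, with pseudo-complement $\neg x=\bigvee\{z\mid z\wedge x=0\}$. The goal is to show that every $x$ is complemented, i.e.\ that $x\vee\neg x=1$, which upgrades $L$ to a complete Boolean algebra. Since $x\wedge\neg x=0$ holds automatically, it suffices to prove $1\leq x\vee\neg x$, and here I would invoke \eqref{eq prop Pos} with the pair $(1,\,x\vee\neg x)$: it reduces the problem to showing that $z\neq 0$ implies $z\wedge(x\vee\neg x)\neq 0$ for every $z$. Using frame distributivity, $z\wedge(x\vee\neg x)=(z\wedge x)\vee(z\wedge\neg x)$; if this were $0$ then in particular $z\wedge x=0$, whence $z\leq\neg x$ and therefore $z=z\wedge\neg x=0$, contradicting $z\neq 0$. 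Thus $z\wedge(x\vee\neg x)\neq 0$, and \eqref{eq prop Pos} yields $1\leq x\vee\neg x$.

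I expect this last direction to be the crux of the argument, since it is the only place where \eqref{eq prop Pos} is genuinely used to manufacture complements out of the merely Heyting structure of the frame; the rest is bookkeeping with the classical identifications $\Pos(x)\equiv(x\neq 0)$ and the pseudo-complement property $z\wedge x=0\iff z\leq\neg x$. The main pitfall to watch is keeping the metatheoretic use of classical logic (which legitimizes the reading of $\Pos$ and the reasoning by contradiction above) separate from the algebraic conclusion about $L$, so that the final statement really is ``$L$ is a complete Boolean algebra'' and not merely a frame-theoretic restatement of the hypothesis.
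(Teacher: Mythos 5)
Your proof is correct, but it takes a different route from the paper's. You split the statement into two inclusions and verify each directly: for ``cBa $\Rightarrow$ o-algebra'' you instantiate the quantified $z$ at $\neg y$, and for the converse you feed the pair $(1,\,x\vee\neg x)$ into \eqref{eq prop Pos} to manufacture the complement of $x$ explicitly. The paper instead performs a single equivalence transformation: classically $\Pos(z\wedge x)\Rightarrow\Pos(z\wedge y)$ contraposes to $z\leq -y\Rightarrow z\leq -x$, so the entire universally quantified antecedent of \eqref{eq prop Pos} collapses to $-y\leq -x$, and the whole axiom becomes ``$-y\leq -x\Longrightarrow x\leq y$'', which is a standard characterization of Boolean algebras among Heyting algebras. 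The paper's argument is shorter and establishes both directions simultaneously, at the cost of invoking that characterization as known; your version is more self-contained, since the converse direction actually constructs $x\vee\neg x=1$ from the frame structure plus \eqref{eq prop Pos} without appealing to any prior fact about double negation. Your closing remark about keeping the metatheoretic use of LEM separate from the algebraic conclusion is well taken and matches the paper's framing (``classically, overtness is for free, and $\Pos(x)$ is $x\neq 0$'').
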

\begin{proof}
Classically, overtness is for free, and $\Pos(x)$ is $x\neq 0$. So the implication $\Pos (z \wedge x) \Rightarrow \Pos (z \wedge y)$ in \eqref{eq prop Pos} can be rewritten as $z \wedge y= 0 \Rightarrow z \wedge x= 0$, that is, $z \leq-y \Rightarrow z \leq-x$. Therefore $(\forall z\in L ) (\Pos (z \wedge x) \Rightarrow \Pos (z \wedge y))$ becomes simply $-y\leq-x$ and \eqref{eq prop Pos} becomes $-y\leq-x \Longrightarrow x \leq y$. This holds identically in an Heyting algebra if and only if it is in fact a Boolean algebra.
\end{proof}

Constructively, the previous proposition fails badly, because LEM is equivalent to the statement that $\Omega$ (which is an o-algebra) is Boolean.\footnote{The statement ``every complete Boolean algebra is an o-algebra" is equivalent to LEM as well (see, for instance, \cite{2a} proposition $1.1$).}

Given an o-algebra $L$, it is sometimes convenient to introduce a new relation symbol, say $x\olap y$, for the binary predicate $\Pos(x\wedge y)$: this is the {\bf overlap relation} which gives the name to the structure. If $L$ is a powerset, then $x\olap y$ means that $x$ and $y$ overlap, that is, their intersection is inhabited. Classically, $x\olap y$ is $x\wedge y\neq 0$. Clearly, $\Pos(x)$ is equivalent to $x\olap x$ (and also to $x\olap 1$); this suggests that the definition of an o-algebra can be given in terms of $\olap$ (which was Sambin's original definition).

\begin{proposition}
For $L$ a complete lattice, the following are equivalent:
\begin{enumerate}
    \item $L$ is an o-algebra;
    \item there exists a binary relation $\olap$ on $L$ that satisfies the following properties identically.
        \begin{itemize}
        \item $ x \olap y \Longrightarrow y \olap x $ \hfill $ (\textit{symmetry}) $
        \item $ x \olap y \Longrightarrow x \olap (x \wedge y) $ \hfill $ (\textit{meet closure}) $
        \item $ x \olap \bigvee Y \Longrightarrow (\exists y\in Y) (x \olap y)$ \hfill $ (\textit{splitting of joins} )$
        \item $ (x \olap y)  \land  (y\leq z) \Longrightarrow x \olap z  $  \hfill $ (\textit{monotonicity}) $
        \item $ (\forall z\in L ) (z \olap x \Rightarrow z \olap y) \Longrightarrow x \leq y $ \hfill $ (\textit{density}) $
        \end{itemize}
\end{enumerate}
\end{proposition}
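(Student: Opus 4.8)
The plan is to prove both implications through the single dictionary $x\olap y \Longleftrightarrow \Pos(x\wedge y)$, so that $\Pos$ and $\olap$ become interdefinable: in one direction by $x\olap y:=\Pos(x\wedge y)$, and in the other by $\Pos(x):=x\olap x$ (equivalently $x\olap 1$). Under this dictionary each of the five properties of $\olap$ should correspond to one of the defining conditions of an o-algebra, the only genuinely non-routine point being the recovery of the frame distributive law in the direction $(2)\Rightarrow(1)$.

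For $(1)\Rightarrow(2)$ I would set $x\olap y:=\Pos(x\wedge y)$ and verify the five clauses directly. Symmetry and \emph{meet closure} are immediate from $x\wedge y=y\wedge x$ and $x\wedge(x\wedge y)=x\wedge y$; \emph{monotonicity} of $\olap$ follows from \eqref{eq monotonicity} since $y\leq z$ gives $x\wedge y\leq x\wedge z$; \emph{splitting of joins} is \eqref{eq splitting} applied after rewriting $x\wedge\bigvee Y=\bigvee_{y\in Y}(x\wedge y)$ by frame distributivity; and \emph{density} is precisely \eqref{eq prop Pos} read through the dictionary.

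For $(2)\Rightarrow(1)$ I would first prove the key lemma $x\olap y\Longleftrightarrow\Pos(x\wedge y)$ with $\Pos(x):=x\olap x$: the forward direction chains meet closure, symmetry and meet closure again (using $(x\wedge y)\wedge x=x\wedge y$), while the backward direction applies monotonicity and symmetry twice together with $x\wedge y\leq x,y$. With the lemma in hand, \eqref{eq monotonicity} follows by applying monotonicity of $\olap$, then symmetry, then monotonicity again to $x\olap x$ using $x\leq y$; \eqref{eq splitting} comes from $\bigvee Y\olap\bigvee Y$ by splitting of joins and the lemma (noting $\bigvee Y\wedge y=y$); and \eqref{eq prop Pos} is just density restated. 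Since a frame carrying a predicate that satisfies \eqref{eq monotonicity}, \eqref{eq splitting} and \eqref{eq prop Pos} is already an o-algebra (as observed just before the statement, \eqref{eq positivity} being a consequence), it remains only to establish the frame law.

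The hard part, and the step I expect to be the main obstacle, is showing that $L$ is a frame, i.e. that $x\wedge\bigvee Y\leq\bigvee_{y\in Y}(x\wedge y)$ (the reverse inequality being automatic in any complete lattice). I would obtain this inequality via density: given any $z$ with $z\olap(x\wedge\bigvee Y)$, the lemma rewrites this as $(z\wedge x)\olap\bigvee Y$, whence splitting of joins yields some $y\in Y$ with $(z\wedge x)\olap y$, that is $\Pos(z\wedge x\wedge y)$; since $z\wedge x\wedge y\leq z\wedge\bigvee_{y'\in Y}(x\wedge y')$, monotonicity of $\Pos$ gives $z\olap\bigvee_{y'\in Y}(x\wedge y')$. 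As $z$ was arbitrary, density delivers the missing inequality, completing the verification that $L$ is a frame and hence an o-algebra.
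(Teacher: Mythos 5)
Your proposal is correct and follows essentially the same route as the paper's proof: the dictionary $x\olap y\Leftrightarrow\Pos(x\wedge y)$, the key lemma that $x\olap y$ is equivalent to $(x\wedge y)\olap(x\wedge y)$ (via symmetry, meet closure and monotonicity), the derivation of the frame law by density plus splitting of joins, and the observation that \eqref{eq positivity} follows from \eqref{eq monotonicity} and \eqref{eq prop Pos}. No substantive differences.
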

\begin{proof}
The implication $1\Rightarrow 2$ is easy once $x\olap y$ is defined as $\Pos(x\wedge y)$. For instance, splitting of joins holds because binary meets distribute over arbitrary joins (since $L$ is a frame).

As for the reverse implication, we first note that $x\olap y$ is equivalent to $(x\wedge y)\olap (x\wedge y)$ thanks to symmetry, meet closure and monotonicity. Therefore $ (x \wedge y) \olap z$ is always equivalent to $z \olap (y \wedge z)$. We now show that $L$ is a frame, that is, $x\wedge\bigvee Y\leq\bigvee\{x\wedge y\ |\ y\in Y\}$. By density, it is sufficient to check that $z\olap x\wedge\bigvee Y$ implies $z\olap\bigvee\{x\wedge y\ |\ y\in Y\}$. Now $z\olap x\wedge\bigvee Y$ is equivalent to $z\wedge x\olap \bigvee Y$; so there is a $y\in Y$ with $z\wedge x\olap y$, that is, $z\olap x\wedge y$. So $z\olap\bigvee\{x\wedge y\ |\ y\in Y\}$ by monotonicity. Finally, let us define $\Pos(x)$ as $x\olap x$. The only condition on $\Pos$ which needs some proof is \eqref{eq positivity} which follows from \eqref{eq monotonicity} and \eqref{eq prop Pos}, as already noticed. 
\end{proof}

For $L$ an o-algebra, the characterization \eqref{eq atom Pos} of an atom $a\in L$ becomes \begin{equation}\label{eq char atom}
(\forall x\in L)\big((a\olap x)\Longleftrightarrow(a\leq x)\big)\ .
\end{equation} 
By proposition \ref{prop atomic frames}, atomic frames, atomic o-algebras and powersets all amount to the same thing.

\subsection{Non-atomic o-algebras}\label{section non-atomic}

Given any complete Heyting algebra $L$, the set $L_{--}$ = $\{y\in L\ |\ y=--y\}$ has a natural structure of complete Boolean algebra (and every complete Boolean algebra is of this form, because $L_{--}=L$ if $L$ is Boolean). 

A similar result holds for o-algebras \cite{2a}: if $L$ is an overt frame, then the set of all $y\in L$ such that $y$ = $\bigvee\{x\ |\ \forall z(\Pos(z\wedge x)\Rightarrow\Pos(z\wedge y))\}$ is an o-algebra. In particular, if $L=\tau$ where $(X,\tau)$ is a topological space, then we get an o-algebra by considering the set of all $Y\sub X$ such that $Y=\inte\,\cl\,\inte\, Y$, where $\inte$ and $\cl$ are the interior operator and the closure operator corresponding to $\tau$.\footnote{Here $x\in\cl Y$ means that every open neighbourhood of $x$ overlaps $Y$. Assuming that $\cl Y$ is the (set-theoretic pseudo-)complement of the interior of the (set-theoretic pseudo-)complement of $Y$ is tantamount to assuming LEM.} This is a constructive version of the well-known fact that the regular open sets in a topological space form a complete Boolean algebra, which is not atomic, in general, and often with no atoms \cite{3z}.

\section{Morphisms between overlap algebras}

In section \ref{section OFrm} we whall study a category of overlap algebras which, from a classical point of view, is just the category {\bf cBa} of complete Boolean algebras. For the time being, instead, we are going to study a more general kind of morphisms between o-algebras which, classically, correspond to join-preserving maps between complete Boolean algebras.

Sambin's aim in introducing the category $\mathbf{OA}$ of o-algebras was to obtain an extension of the category $ \mathbf{Rel} $ of sets and relations. The definition of an arrow in {\bf OA} makes the assignment $X\mapsto\P(X)$ a functor $\P$ from $ \mathbf{Rel} $ to $ \mathbf{OA} $ which is full, faithful and injective on objects (Proposition \ref{PropPow}).

In the category $\mathbf{Rel}$ a morphism is a binary relation and the composition $S\circ R\subseteq X\times Z$ of the relations $R\subseteq X\times Y$ and $S\subseteq Y\times Z$ is defined by $x(S\circ R)z \Leftrightarrow (\exists y\in Y)(xRy\land ySz)$. 

Given $R\subseteq X\times Y$, its inverse image $R^{-1}:\P(Y)\to\P(X)$ is the function which maps $Y'\subseteq Y$ to $R^{-1}(Y')$ = $\lbrace x\in X\ \vert\ (\exists y\in Y') (xRy)\rbrace$. Clearly, $R^{-1}$ is the identity function on $\P (X)$ if, and only if, $R$ is the equality on $X$.

\begin{lemma}\label{lemmaRel}
For $R\subseteq X\times Y$ and $S\subseteq Y\times Z$, $(S\circ R)^{-1} = R^{-1}\circ S^{-1}$.
\end{lemma}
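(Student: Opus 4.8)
The plan is to prove the identity $(S\circ R)^{-1} = R^{-1}\circ S^{-1}$ by a direct chase of elements, unwinding the definitions on both sides until they coincide. This is a purely set-theoretic verification, so no appeal to the overlap-algebra machinery is needed; the content is entirely in the definitions of relational composition and of inverse image given just above the statement.

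First I would fix an arbitrary $Z'\subseteq Z$ and an arbitrary $x\in X$, and compute membership of $x$ in each side applied to $Z'$. On the left, by the definition of inverse image, $x\in (S\circ R)^{-1}(Z')$ holds precisely when there exists $z\in Z'$ with $x\,(S\circ R)\,z$; expanding the definition of $S\circ R$, this unfolds to $(\exists z\in Z')(\exists y\in Y)(xRy\land ySz)$. On the right, I would first apply $S^{-1}$ to get $S^{-1}(Z') = \{y\in Y\ |\ (\exists z\in Z')(ySz)\}$, and then apply $R^{-1}$ to that set, obtaining $x\in R^{-1}(S^{-1}(Z'))$ iff $(\exists y\in S^{-1}(Z'))(xRy)$, which unfolds to $(\exists y\in Y)\big(xRy\land(\exists z\in Z')(ySz)\big)$.

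The last step is to observe that the two resulting formulas, $(\exists z\in Z')(\exists y\in Y)(xRy\land ySz)$ and $(\exists y\in Y)(\exists z\in Z')(xRy\land ySz)$, are logically equivalent: they differ only in the order of the two existential quantifiers and in an inessential rearrangement of a conjunction, and commuting existentials and conjuncts is valid intuitionistically, so the argument is fully constructive. Since $Z'$ and $x$ were arbitrary, this establishes that the two functions agree on every argument, hence are equal.

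I do not expect any genuine obstacle here: the only thing to be careful about is keeping the bookkeeping of the nested existentials straight, since one must be sure that the witness $y\in Y$ supplied for $x\,(S\circ R)\,z$ is the same $y$ that certifies $x\in R^{-1}(S^{-1}(Z'))$, but this matching is immediate because in both cases $y$ is required to satisfy exactly $xRy$ together with $ySz$ for some $z\in Z'$.
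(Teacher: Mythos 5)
Your proof is correct and follows essentially the same element-chase as the paper: both unfold the definitions of relational composition and inverse image on an arbitrary $x\in X$ and $Z'\subseteq Z$ and observe that the two resulting existential formulas coincide up to commuting quantifiers, which is intuitionistically valid. No issues.
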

\begin{proof}
For every $x\in X$ and $D\sub Z$, $x\in (S\circ R)^{-1}(D) $ iff $x(S\circ R)z$ for some $z\in D$; this means that $xRy$ and $ySz$ for some $y\in Y$, and some $z\in D$. In other words, $x\in R^{-1}(S^{-1}(D))$. 
\end{proof}

Each binary relation $R\subseteq X\times Y$ has a ``symmetric'' $R^\dagger\subseteq Y\times X$, where $yR^\dagger x$ iff $xRy$. Its inverse image is a function $(R^\dagger)^{-1}:\P(X)\to\P(Y)$, the direct image of $R$, such that 
\begin{equation}\label{eq.R-}
    R^{-1}(Y')\olap X'\ \textrm{ in }\P(X)\quad\Longleftrightarrow\quad Y'\olap (R^\dagger)^{-1}(X')\ \textrm{ in }\P(Y).
\end{equation} This motivates the following study of symmetrizable functions.

\subsection{Symmetrizable functions}

\begin{definition}
Let $L$ and $M$ be two o-algebras.\footnote{Such a notion makes sense also for $L$ and $M$ overt frames, with $x\olap y$ replaced by $\Pos(x\wedge y)$.} Two functions $f:L\to M$ and $g:M\to L$ are {\bf symmetric} (or {\bf conjugated} \cite{3c}) if
\begin{equation}
f(x) \olap y \Longleftrightarrow x \olap g(y)
\end{equation}
for every $x\in L$ and $y\in M$.\footnote{Classically, the same idea can be expressed by the condition $f(x)\wedge y=0$ $\Leftrightarrow$ $x\wedge g(y)=0$, which is the definition originally proposed \cite{3c}.}
\end{definition}
For instance, the function $a\wedge\_:L\to L$ is self-symmetric, for every element $a$ in an o-algebra $L$.

A function between o-algebras $f:L\to M$ has at most one symmetric.\footnote{This fact fails, in general, when $L$ is an overt frame but not an o-algebra.} Indeed, if $g_1,g_2:M\to L$ are symmetric of $f$, then $ x \olap g_1(y)$ $\Leftrightarrow$ $x \olap g_2(y)$ for every $x$ and $y$; and hence $g_1(y)=g_2(y)$ for every $y$, by density in $L$. 

\begin{definition}
A function $f:L\to M$ between two o-algebras is {\bf symmetrizable} if $f$ has a symmetric. In that case, we write $ f^\dagger$ for the symmetric of $f$.
\end{definition}
Clearly if $f$ is symmetrizable, then $f^\dagger$ is symmetrizable too and $(f^\dagger)^\dagger=f$. Note that if $f$ is symmetrizable, then $f^\dagger$ can be defined in terms of $f$ by means of the formula $f^\dagger(y)$ = $\bigvee \lbrace x \in L\ \vert\ (\forall z \in L) \big(z \olap x \Rightarrow f(z) \olap y\big)\rbrace$.

\begin{proposition}
Let $f:L\to M$ be a function between o-algebras. If $f$ is symmetrizable, then $f$ preserves all joins; and the converse holds classically. 
\end{proposition}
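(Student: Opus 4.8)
The plan is to prove the two directions separately, starting with the constructive implication that symmetrizability forces join-preservation. First I would let $f:L\to M$ be symmetrizable with symmetric $f^\dagger$, fix a family $Y\subseteq L$, and show $f(\bigvee Y)=\bigvee\{f(y)\mid y\in Y\}$. Since $f$ is monotone (which itself follows from symmetrizability, because $x\leq x'$ can be detected via the overlap relation through $f^\dagger$), the inequality $\bigvee\{f(y)\mid y\in Y\}\leq f(\bigvee Y)$ is immediate. For the reverse inequality I would invoke density in $M$: it suffices to show that $w\olap f(\bigvee Y)$ implies $w\olap\bigvee\{f(y)\mid y\in Y\}$ for every $w\in M$. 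Using the symmetry equation, $w\olap f(\bigvee Y)$ is equivalent to $f^\dagger(w)\olap\bigvee Y$, whereupon splitting of joins yields some $y\in Y$ with $f^\dagger(w)\olap y$, which back-translates to $w\olap f(y)$, and then monotonicity of the overlap relation gives $w\olap\bigvee\{f(y)\mid y\in Y\}$. This is the core of the argument and the step I expect to carry the real weight, though it is not hard once the dictionary between $\olap$ and $f^\dagger$ is in place.

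For the classical converse I would assume LEM and a join-preserving $f:L\to M$, and exhibit an explicit symmetric. The natural candidate is the formula already recorded in the excerpt, namely
\[
f^\dagger(y)=\bigvee\bigl\{x\in L\mid(\forall z\in L)\bigl(z\olap x\Rightarrow f(z)\olap y\bigr)\bigr\},
\]
but classically it is cleaner to work through complements. Under LEM each o-algebra is a complete Boolean algebra and $x\olap y$ is $x\wedge y\neq 0$, equivalently $\neg(x\leq -y)$. A join-preserving map between complete Boolean algebras has a right adjoint $f_*$ given by $f_*(m)=\bigvee\{x\mid f(x)\leq m\}$, and I would define $f^\dagger(y)=-f_*(-y)$. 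The verification then reduces to the adjunction identity together with the Boolean equivalence between $\olap$ and failure of a $\leq$-relation: $f(x)\olap y$ fails iff $f(x)\leq -y$ iff $x\leq f_*(-y)$ iff $x\leq -f^\dagger(y)$ iff $x\olap f^\dagger(y)$ fails, and taking contrapositives (legitimate classically) gives the symmetry equation.

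The main obstacle is conceptual rather than computational: one must be careful that join-preservation alone does \emph{not} suffice constructively, so the converse genuinely needs the complementation and the equivalence $x\olap y\Leftrightarrow\neg(x\leq -y)$, both of which are classical. Concretely, the passage from ``$f(x)\olap y$ fails'' to a statement about $\leq$ relies on $\neg(x\wedge y\neq 0)\Rightarrow x\wedge y=0$ and on the biconditional form of the overlap relation, which are exactly the places where LEM enters; the forward direction, by contrast, is fully constructive because it only uses the positive splitting and density axioms and never needs to negate an overlap. I would therefore present the forward direction as the substantive content and flag clearly that the converse is stated only classically.
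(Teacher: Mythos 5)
Your proof is correct and follows essentially the same route as the paper: the forward direction is the paper's density argument (the paper phrases it as a single chain of equivalences $y\olap f(\bigvee_i x_i)\Leftrightarrow\dots\Leftrightarrow y\olap\bigvee_i fx_i$ rather than splitting into two inequalities, but the content is identical), and the classical converse is exactly the paper's construction $f^\dagger(y)=-\forall_f(-y)$ via the right adjoint. Nothing to add.
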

\begin{proof}
For every $y\in M$, we have $y \olap f(\bigvee _i x_{i})$ iff $f^\dagger y \olap \bigvee _ix_{i}$ iff $f^\dagger y\olap x_{i}$ for some $i$ iff $y \olap fx_{i}$ for some $i$ iff $y \olap \bigvee _i fx_{i}$. This shows (by density) that $f(\bigvee _ix_{i}) = \bigvee _i fx_{i}$.

Classically, an o-algebra is exactly a cBa. If $ f: L  \rightarrow  M  $ preserves all joins, then it has a right adjoint $ \forall_f $. We claim that $ f^\dagger  $ does exist and $ f^\dagger(y)  = -\forall_f(-y) $. For $ x \olap -\forall_f(-y)$ $\Leftrightarrow$ $x\wedge -\forall_f(-y) \neq 0$ $\Leftrightarrow$ $x \nleq \forall_f(-y)$ $\Leftrightarrow$ $f(x) \nleq -y$ $\Leftrightarrow$ $f(x) \wedge y \neq 0$ $\Leftrightarrow$ $f(x) \olap y$.
\end{proof}

\begin{remark}\label{remark:symmetrizable}
Classical logic is necessary in the second part of the previous proposition in the sense that LEM follows from the assumption that every join-preserving function between  o-algebras is symmetrizable, as we now see. The argument is based on the fact that LEM is equivalent to assuming that every topological space in which $\cl$ is the identity operator must be discrete.\footnote{\label{counterexample}The following is essentially the same proof given in \cite{2a}. Let us start by constructing a family of topological spaces $(2,\tau_p)$ where $2=\lbrace 0,1\rbrace $ and $p\in\Omega$. Let $\tau_p$ be the topology  of those subsets $X\subseteq 2$ such that if $X$ is inhabited, then either $p$ holds or $p$ implies $X=2$. It is not difficult to check that $\tau_p$ is a topology (and $\tau_p$ is discrete if either $p$ or $\neg p$, which is always the case classically). We claim that every $X\sub 2$ is closed. If $x\in\cl X $, then the open set $\{y\ |\ (y=x)\vee p\}$ must overlap $X$. So either $ x\in X $ or $p$; in the latter case, however, $ \tau_{p} $ is the discrete topology, and hence $ x\in X $ anyway. Therefore $\cl$ is the identity. Now if $\tau_p$ were discrete, then $\lbrace 0\rbrace $ (and $\{1\}$) would be open, hence $p\vee\neg p$ would be true.}
Let us consider any topological space $(X,\tau)$ such that $cl=id$; so $\tau$ is an o-algebra (because every open set is regular). Let $f$ be the inclusion map $\tau\hookrightarrow\P(X)$ and let us assume that $f^\dagger $ exists, that is, $U\olap Y$ $\Leftrightarrow$ $U\olap f^\dagger(Y)$ for every subset $Y$ and every open $U$. This means that $cl(Y)= cl f^\dagger (Y) $ for every $Y$. Since $cl=id$, we get $Y= f^\dagger (Y)$, and hence $Y$ is open, for every $Y$.
\end{remark}

\begin{proposition}\label{prop f from powersets}
Let $f:L\to M$ be a join preserving function between two o-algebras. If $L$ is atomic, then $f$ is symmetrizable.
\end{proposition}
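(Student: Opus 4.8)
The plan is to use the atomicity of $L$ to reduce everything to the behaviour of $f$ on atoms, and to exhibit the symmetric of $f$ explicitly. Write $X$ for the set of atoms of $L$, so that by atomicity every $x\in L$ equals $\bigvee\{a\in X\mid a\leq x\}$. The natural candidate for the symmetric is the map $g:M\to L$ defined by
\[
g(y)=\bigvee\{a\in X\mid f(a)\olap y\},
\]
which is well defined since $L$ is a complete lattice. I would then verify the symmetry condition $f(x)\olap y\Leftrightarrow x\olap g(y)$, which makes $g$ a symmetric of $f$ and hence $f^\dagger=g$, uniqueness of the symmetric having already been established.

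The crux is a lemma describing $g$ on atoms: for every atom $b\in X$ and every $y\in M$,
\[
b\leq g(y)\quad\Longleftrightarrow\quad f(b)\olap y.
\]
The implication from right to left is immediate from the definition of $g$. For the converse I would use the characterization \eqref{eq char atom} of atoms: from $b\leq g(y)$ we get $b\olap g(y)$, and splitting of joins then yields some atom $a$ with $f(a)\olap y$ and $b\olap a$. Here one uses that two atoms overlap exactly when they are equal, which follows from \eqref{eq char atom} together with symmetry, so that $a=b$ and hence $f(b)\olap y$. This step, where constructivity really matters, is the main obstacle: one must extract a genuine witness $a$, available precisely because $\olap$ encodes inhabitedness and splits over joins, rather than argue by complementation as one would classically.

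With the lemma in hand, the symmetry condition follows by reducing both sides to the same existential statement. Since $f$ preserves joins and $x=\bigvee\{a\in X\mid a\leq x\}$, we have $f(x)=\bigvee\{f(a)\mid a\leq x\}$, so by splitting of joins (and symmetry) $f(x)\olap y$ is equivalent to the existence of an atom $a\leq x$ with $f(a)\olap y$. On the other side, writing $x$ again as the join of the atoms below it and splitting, $x\olap g(y)$ is equivalent to the existence of an atom $b\leq x$ with $b\olap g(y)$, that is, with $b\leq g(y)$ by \eqref{eq char atom}, that is, with $f(b)\olap y$ by the lemma. The two conditions coincide, which proves $f(x)\olap y\Leftrightarrow x\olap g(y)$ and hence that $f$ is symmetrizable with $f^\dagger=g$.
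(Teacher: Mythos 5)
Your proof is correct and takes essentially the same approach as the paper: the paper identifies $L$ with $\P(X)$ (via Proposition~\ref{prop atomic frames}, $X$ the set of atoms) and sets $f^\dagger y=\{x\in X\mid f(\{x\})\olap y\}$, which under that order-isomorphism is exactly your $g(y)=\bigvee\{a\in X\mid f(a)\olap y\}$. You merely carry out intrinsically, and in full detail, the verification that the paper leaves implicit.
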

\begin{proof}
Up to order-isomorphism, we can assume that $L$ is $\P(X)$ for some $X$. Put $f^\dagger y=\{x\in X\ |\ f(\{x\})\olap y\}$.
\end{proof}

It is a corollary of the previous proposition (but it can be easily checked directly) that the mapping $X\mapsto\bigvee X$ gives a symmetrizable map from $\P(L)$ to $L$. Its symmetric is given by $y\mapsto\{x\in L\ |\ x\olap y\}$.

\begin{remark}\label{remark f to Omega}
Note that a function $f:L\to\Omega$ is symmetrizable if and only if there is $a\in L$ such that $f(x)=(x\olap a)$ for all $x\in L$. Indeed, given $a\in L$, the mapping $x\mapsto (x\olap a)$ is symmetrizable, and its symmetric maps $p\in\Omega$ to $\bigvee\{x\in L\ |\ (x=a)\land p\}$ (classically, of course, this is just $a$ if $p=1$, and $0$ if $p=0$). 
Conversely, given a symmetrizable $f$, put $a=f^\dagger(1)$. Also note that $a$ is an atom if and only if $f$ preserves finite meets.
\end{remark}

\begin{propC}[{\cite[Theorem $1.15.(iii)$]{3c}}]\label{lemma-ff}
Let $f:L\to M$ and $g:M\to L$ be two functions between o-algebras. Then $f$ and $g$ are symmetric if and only if all the following conditions hold identically:
\begin{enumerate}
    \item $\Pos(f(x))\Rightarrow\Pos(x)$ (classically, $f(0)=0$);
    \item $\Pos(g(y))\Rightarrow\Pos(y)$ (classically, $g(0)=0$);
    \item $f(x)\wedge y\leq f(x\wedge g(y))$;
    \item $x\wedge g(y)\leq g(f(x)\wedge y)$.
\end{enumerate}
\end{propC}
\begin{proof}
Assume that $f$ and $g$ are symmetric. Now $\Pos(f(x))$ can be rewritten as $x\olap g(f(x))$; so $x\olap 1$, that is, $\Pos(x)$; this proves 1, and hence 2 by symmetry.\footnote{\label{remark Posf}Note that $\Pos(fx)$ $\Rightarrow$ $\Pos(x)$ holds true already in the case of overt frames. Indeed, since $x$ = $\bigvee\{x'\ |\ \Pos(x')\land (x'\leq x)\}$ by \eqref{eq positivity bis}, $fx$ = $\bigvee\{fx'\ |\ \Pos(x') \land (x'\leq x)\}$. So if $\Pos(fx)$, then $\Pos(fx')$ for some $x'\leq x$ with $\Pos(x')$; in particular, $\Pos(x')$ for some $x'\leq x$, and hence $\Pos(x)$.} To check 3 (and 4) we use density: $z\olap f(x)\wedge y$ is equivalent to $f(x)\olap z\wedge y$ and hence to $x\olap g(z\wedge y)$; since $g$ is monotone (because it preserves joins), we also have $x\olap g(z)\wedge g(y)$, which is equivalent to $g(z)\olap x\wedge g(y)$ and hence to $z\olap f(x\wedge g(y))$.

Conversely, if $f(x)\olap y$, that is, $\Pos(f(x)\wedge y)$, then also $\Pos(f(x\wedge g(y)))$ by 3; so $\Pos(x\wedge g(y))$ by 1, that is, $x\olap g(y)$; and symmetrically for the other direction. 
\end{proof}

As a corollary, if $f$ is symmetrizable, then $fx$ = $fx\wedge 1$ $\leq$ $f(x\wedge f^\dagger(1))$ $\leq$ $ff^\dagger(fx\wedge 1)$ = $ff^\dagger f x$. And, similarly, $f^\dagger y$ $\leq$ $f^\dagger ff^\dagger y$.

\begin{proposition}\label{prop o-morph}
Let $f:L\to M$ be a symmetrizable function between o-algebras. Then the following conditions are equivalent:
\begin{enumerate}
    \item if $fx_1\olap fx_2$, then $x_1\olap x_2$;
    \item $f$ preserves binary meets;
    \item $f^\dagger fx\leq x$ for every $x$.
\end{enumerate}
Moreover the following are equivalent: 
\begin{enumerate}\setcounter{enumi}{3}
    \item if $x_1\olap x_2$, then $fx_1\olap fx_2$;
    \item if $\Pos(x)$, then $\Pos(fx)$;
    \item $f^\dagger 1=1$;
    \item $x\leq f^\dagger fx$ for every $x$.
\end{enumerate}
\end{proposition}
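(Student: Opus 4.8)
The plan is to treat the two clusters separately and, in each, arrange the conditions into a cycle of implications. Throughout I rely on two standard moves. The first is \emph{density}: to prove $u\leq v$ it suffices to show $z\olap u\Rightarrow z\olap v$ for every $z$. The second is the defining property of the symmetric, $f(z)\olap w\Leftrightarrow z\olap f^\dagger(w)$, which lets me transport overlaps across $f$ and convert statements about $f$ into statements about $f^\dagger$. I will also use conditions~1, 3 and~4 of the conjugation characterisation in Proposition~\ref{lemma-ff}, together with the already-established fact that a symmetrizable $f$ preserves joins, hence is monotone, and the identity of $\Pos(x)$ with $x\olap x$ and with $x\olap 1$.

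For the first cluster I would prove $1\Rightarrow 3\Rightarrow 2\Rightarrow 1$. For $1\Rightarrow 3$, to obtain $f^\dagger fx\leq x$ I invoke density and reduce to $z\olap f^\dagger fx\Rightarrow z\olap x$; rewriting the hypothesis as $fz\olap fx$ via the symmetry, condition~1 delivers $z\olap x$. For $2\Rightarrow 1$, if $fx_1\olap fx_2$ then $\Pos(fx_1\wedge fx_2)=\Pos(f(x_1\wedge x_2))$ by meet preservation, and Proposition~\ref{lemma-ff}.1 yields $\Pos(x_1\wedge x_2)$, i.e.\ $x_1\olap x_2$. The implication $3\Rightarrow 2$ is the delicate one: the inequality $f(x_1\wedge x_2)\leq fx_1\wedge fx_2$ is automatic from monotonicity, while for the reverse I feed $y=fx_2$ into the conjugation inequality $f(x)\wedge y\leq f(x\wedge f^\dagger(y))$ of Proposition~\ref{lemma-ff}.3, getting $fx_1\wedge fx_2\leq f(x_1\wedge f^\dagger fx_2)$, and then collapse $f^\dagger fx_2$ below $x_2$ using condition~3 and monotonicity of $f$.

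For the second cluster I would establish $4\Leftrightarrow 5$ directly and close the cycle $5\Rightarrow 6\Rightarrow 7\Rightarrow 5$. Here $4\Rightarrow 5$ is the diagonal instance $x_1=x_2$, and $5\Rightarrow 4$ follows because $f(x_1\wedge x_2)\leq fx_1\wedge fx_2$ and $\Pos$ is monotone by \eqref{eq monotonicity}. For $5\Rightarrow 6$ I use density once more: $1\leq f^\dagger 1$ reduces to $z\olap 1\Rightarrow z\olap f^\dagger 1$, which after rewriting reads $\Pos(z)\Rightarrow\Pos(fz)$, precisely condition~5. For $6\Rightarrow 7$ I specialise the other conjugation inequality $x\wedge f^\dagger(y)\leq f^\dagger(f(x)\wedge y)$ of Proposition~\ref{lemma-ff}.4 at $y=1$ and substitute $f^\dagger 1=1$. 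Finally $7\Rightarrow 5$: from $\Pos(x)$, i.e.\ $x\olap x$, monotonicity of $\olap$ together with $x\leq f^\dagger fx$ gives $x\olap f^\dagger fx$, which the symmetry rewrites as $fx\olap fx$, that is $\Pos(fx)$.

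The step I expect to be the main obstacle is $3\Rightarrow 2$, since it is the only one that genuinely needs the two-sided conjugation inequalities of Proposition~\ref{lemma-ff} rather than a direct density-plus-symmetry manipulation; everything else is a more or less mechanical translation between overlaps, positivity, and the two inequalities $f^\dagger fx\leq x$ and $x\leq f^\dagger fx$. As a guiding sanity check, the first cluster says that $f$ \emph{reflects} overlap (equivalently $f^\dagger f\leq\mathrm{id}$) while the second says that $f$ \emph{preserves} overlap (equivalently $\mathrm{id}\leq f^\dagger f$), so conditions~3 and~7 are precisely the two halves of $f^\dagger f=\mathrm{id}$.
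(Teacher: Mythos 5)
Your proof is correct; every implication checks out against the definitions and against Proposition~\ref{lemma-ff}. The route differs from the paper's only in the decomposition of the two cycles: the paper proves $1\Rightarrow 2\Rightarrow 3\Rightarrow 1$ and $4\Rightarrow 5\Rightarrow 6\Rightarrow 7\Rightarrow 4$, doing the hardest step ($1\Rightarrow 2$) by a direct chain of overlap rewritings ($z\olap(fx_1\wedge fx_2)$ is massaged through $f^\dagger(z\wedge fx_1)\olap x_2$ and back to $z\olap f(x_1\wedge x_2)$), whereas you reorder the first cycle as $1\Rightarrow 3\Rightarrow 2\Rightarrow 1$ and push the real work into $3\Rightarrow 2$, which you settle by specialising the conjugation inequality $f(x)\wedge y\leq f(x\wedge f^\dagger y)$ at $y=fx_2$ and collapsing $f^\dagger fx_2\leq x_2$. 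This buys you a cleaner argument: your $1\Rightarrow 3$ and $2\Rightarrow 1$ are one-line density or positivity transports, and the only nontrivial step is an explicit instance of an inequality already established in Proposition~\ref{lemma-ff}, rather than a bespoke overlap calculation. The second cluster is handled with the same ingredients as the paper ($6\Rightarrow 7$ via the other conjugation inequality at $y=1$, the rest by monotonicity of $\olap$ and the identification of $\Pos(x)$ with $x\olap x$ and $x\olap 1$); splitting it as $4\Leftrightarrow 5$ plus the cycle $5\Rightarrow 6\Rightarrow 7\Rightarrow 5$ still yields all four equivalences. Your closing remark that conditions~3 and~7 are the two halves of $f^\dagger f=\mathrm{id}$, i.e.\ that the first cluster says $f$ reflects overlap and the second that it preserves it, is exactly the right way to see why the statement is organised as it is.
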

\begin{proof}

$1\Rightarrow 2$: $z\olap (fx_1\wedge fx_2)$ can be rewritten as $f^\dagger(z\wedge fx_1)\olap x_2$, which yields $(f^\dagger z\wedge f^\dagger fx_1)\olap x_2$; this is equivalent to $fx_1\olap f(f^\dagger z\wedge x_2)$, which implies $x_1\olap (f^\dagger z \wedge x_2)$ by assumption; in other words, $f^\dagger z\olap (x_1\wedge x_2)$, that is,  $z\olap f(x_1\wedge x_2)$.

$2\Rightarrow 3$: $y\olap f^\dagger fx$ iff $fy\olap fx$ iff $(fy\wedge fx)\olap 1$ iff $f(y\wedge x)\olap 1$ iff $(y\wedge x)\olap f^\dagger 1$, and hence $(y\wedge x)\olap 1$, that is, $y\olap x$.

$3\Rightarrow 1$: if $fx_1\olap fx_2$, then $x_1\olap f^\dagger fx_2$ ($\leq x_2$), and hence $x_1\olap x_2$.

$4\Rightarrow 5$: since $\Pos(x)$ is $x\olap x$, and $\Pos(fx)$ is $fx\olap fx$.

$5\Rightarrow 6$: $z\olap 1$ iff $\Pos(z)$, which implies $\Pos(fz)$; this is equivalent to $fz\olap 1$, that is, $z\olap f^\dagger 1$.

$6\Rightarrow 7$: $z\olap x$ iff $(z\wedge x)\olap 1$ iff $(z\wedge x)\olap f^\dagger 1$ iff $f(z\wedge x)\olap 1$; the last implies $(fz\wedge fx)\olap 1$, which is equivalent to $fz\olap fx$ and hence to $z\olap f^\dagger f x$.

$7\Rightarrow 4$: if $x_1\olap x_2$ ($\leq f^\dagger fx_2$), then $x_1\olap f^\dagger fx_2$, that is, $fx_1\olap fx_2$. 
\end{proof}

\section{The category {\bf OA} of overlap algebras}

The identity function $ id: L  \rightarrow L $ on an o-algebra $L$ is symmetrizable with $ id^\dagger = id$. For $L$, $M$, $N$ o-algebras, if $f: L  \rightarrow  M$ and $g:  M  \rightarrow N$ are symmetrizable, then  $ g \circ f $ is symmetrizable too and 
$$ (g\circ f)^\dagger  = f^\dagger  \circ g^\dagger  $$ 
since
$ g(f(x)) \olap z \Leftrightarrow f(x) \olap g^\dagger (z) \Leftrightarrow x \olap f^\dagger (g^\dagger (z)) $. So o-algebras and symmetrizable functions form a category \textbf{OA}. We will sometimes refer to arrows in {\bf OA} as \emph{overlap-morphisms} or \emph{o-morphisms}. 

The category {\bf OA} is a \textit{dagger} category, that is, a category $\mathcal{C}$ equipped with an endofunctor $(\_)^\dagger:\mathcal{C}^{op}\rightarrow \mathcal{C}$ which is the identity on objects and an involution on arrows.

\begin{proposition}\label{PropPow}
Let $\P:\mathbf{Rel}^{op}\rightarrow \mathbf{OA} $ be the functor that associates to every relation $R\subseteq X\times Y$ its inverse image $\P(R)=R^{-1}:\P(Y)\to\P(X)$. Then $\P$ is a full and faithful functor. Moreover, $\P(R^\dagger)=(\P(R))^\dagger$.
\end{proposition}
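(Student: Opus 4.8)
The functoriality of $\P$ is essentially already in hand: Lemma~\ref{lemmaRel} gives $(S\circ R)^{-1}=R^{-1}\circ S^{-1}$, and the remark preceding it records that the inverse image of the equality relation is the identity function, so $\P$ respects composition and identities (with the variance making it covariant on $\mathbf{Rel}^{op}$). What remains is to check that each $\P(R)=R^{-1}$ is genuinely an arrow of $\mathbf{OA}$, i.e.\ that it is symmetrizable, together with the dagger compatibility, faithfulness, and fullness. My plan is to read off the well-definedness and the ``moreover'' clause at once from \eqref{eq.R-}: that equivalence says exactly that $R^{-1}:\P(Y)\to\P(X)$ and $(R^\dagger)^{-1}:\P(X)\to\P(Y)$ are symmetric, since $R^{-1}(Y')\olap X'$ iff $Y'\olap (R^\dagger)^{-1}(X')$. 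Hence $R^{-1}$ is symmetrizable, so $\P$ does land in $\mathbf{OA}$; and because a symmetrizable function has at most one symmetric, $(R^{-1})^\dagger=(R^\dagger)^{-1}$, which is precisely $\P(R)^\dagger=\P(R^\dagger)$.

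For faithfulness I would recover $R$ from $R^{-1}$ by evaluating on singletons. Since $R^{-1}(\{y\})=\{x\in X\ |\ xRy\}$, we have $xRy\Leftrightarrow x\in R^{-1}(\{y\})$, so $R$ is completely determined by the function $R^{-1}$; thus $R_1^{-1}=R_2^{-1}$ forces $R_1=R_2$.

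Fullness is the one clause with real content. Given an arbitrary o-morphism $h:\P(Y)\to\P(X)$, I would define $R\sub X\times Y$ by declaring $xRy$ to hold iff $x\in h(\{y\})$, mimicking the formula just used for faithfulness, and then show $R^{-1}=h$. Unwinding the definitions, $R^{-1}(Y')=\{x\ |\ (\exists y\in Y')\,x\in h(\{y\})\}=\bigcup_{y\in Y'}h(\{y\})$. On the other side, writing $Y'=\bigcup_{y\in Y'}\{y\}$ and using that $h$, being symmetrizable, preserves all joins (by the earlier proposition on symmetrizable functions), we obtain $h(Y')=\bigvee_{y\in Y'}h(\{y\})=\bigcup_{y\in Y'}h(\{y\})$. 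The two expressions coincide, so $h=R^{-1}=\P(R)$.

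The crux, and the only place where anything beyond routine bookkeeping occurs, is this last point: a symmetrizable map out of $\P(Y)$ is pinned down by its values on singletons, because the singletons join up to every subset and symmetrizable maps preserve joins. In other words, fullness is really the atomicity of $\P(Y)$ (singletons form a join-basis) in disguise, reducing relations in $X\times Y$ to join-preserving maps $\P(Y)\to\P(X)$; everything else is a direct translation through \eqref{eq.R-} and Lemma~\ref{lemmaRel}.
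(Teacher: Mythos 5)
Your proposal is correct and follows essentially the same route as the paper: symmetrizability and the dagger clause are read off from \eqref{eq.R-}, faithfulness comes from evaluating $R^{-1}$ on singletons, and fullness uses the same relation $xRy\Leftrightarrow x\in h(\{y\})$ together with join-preservation of $h$. The only (harmless) extra is your explicit appeal to uniqueness of the symmetric for $(\P(R))^\dagger=\P(R^\dagger)$, which the paper leaves implicit.
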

\begin{proof} The map $\P(R)$ is symmetrizable by equation \eqref{eq.R-}, and $(\P(R))^\dagger=\P(R^\dagger)$. 
Lemma \ref{lemmaRel} shows that $ \P $ is a functor. 

Given $f:\P(Y)\to\P(X)$, let $R\subseteq X\times Y$ be the relation defined as $xRy \Leftrightarrow x\in f(\lbrace y\rbrace)$. Then $x\in R^{-1}(D)$ iff $x\in f(\lbrace y\rbrace)$ for some $y\in D$ iff $x\in \bigcup_{y\in D} f(\lbrace y\rbrace)$ = $f(\bigcup_{y\in D} \lbrace y\rbrace)$ = $f(D)$. This shows that $\P$ is full.
And $\P$ is clearly faithful, for if $ R,S\subseteq X\times Y $ are such that $R^{-1}=S^{-1}$, then $xRy \Leftrightarrow x\in R^{-1}(\lbrace y \rbrace)\Leftrightarrow x\in S^{-1}(\lbrace y \rbrace) \Leftrightarrow xSy$. 
\end{proof}

\subsection{Iso-, mono- and epi-morphisms in OA}

\begin{proposition}
A bijective function $f:L\to M$ is an isomorphism in {\bf OA} if and only if it is an order-isomorphism. In that case $f^\dagger=f^{-1}$ (isomorphisms in {\bf OA} are ``unitary").
\end{proposition}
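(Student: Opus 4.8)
The plan is to prove both implications by leaning on a fact recalled earlier in the paper: the positivity predicate $\Pos$, and hence the overlap relation $x\olap y=\Pos(x\wedge y)$, is uniquely determined by the order. This is precisely what will force a bijection that is merely an order-isomorphism to respect automatically all the extra o-algebra structure, and it is the crux of the whole argument.

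For the easy direction ($\Rightarrow$), suppose the bijection $f$ is an isomorphism in \textbf{OA}. Then $f$ is in particular an arrow, i.e.\ symmetrizable, and so is its categorical inverse $g$. Since $f$ is a bijection, $g$ must agree with the set-theoretic inverse $f^{-1}$. Now both $f$ and $f^{-1}$ are symmetrizable, hence join-preserving by the proposition that symmetrizable functions preserve joins, hence monotone; a bijection which is monotone together with its inverse is exactly an order-isomorphism.

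For the converse ($\Leftarrow$), suppose $f$ is an order-isomorphism. Then $f$ preserves arbitrary joins and binary meets, and the main step is to observe that it also preserves $\Pos$ in both directions, $\Pos(x)\Leftrightarrow\Pos(f(x))$. I would justify this by transport of structure: the predicate $y\mapsto\Pos\big(f^{-1}(y)\big)$ on $M$ satisfies the positivity axioms \eqref{eq monotonicity}--\eqref{eq positivity} (they are purely order-theoretic, and $f^{-1}$ is an order-isomorphism), so by uniqueness of the positivity predicate it coincides with $\Pos$ on $M$. Combining this with meet-preservation gives preservation of overlap in both directions: $x\olap x'=\Pos(x\wedge x')\Leftrightarrow\Pos(f(x)\wedge f(x'))=f(x)\olap f(x')$. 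Writing an arbitrary $y\in M$ as $y=f(x')$ with $x'=f^{-1}(y)$, this reads $f(x)\olap y\Leftrightarrow x\olap f^{-1}(y)$, which says exactly that $f^{-1}$ is a symmetric of $f$. Hence $f$ is symmetrizable with $f^\dagger=f^{-1}$, which settles the ``unitary'' claim; and since $f^{-1}$ is itself an order-isomorphism, the same reasoning makes $f^{-1}$ symmetrizable, so $f$ and $f^{-1}$ are mutually inverse arrows of \textbf{OA} and $f$ is an isomorphism there.

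The only delicate point, and the single place where the o-algebra structure rather than bare lattice structure enters, is the preservation of $\Pos$ (and thus of $\olap$) by an order-isomorphism; everything else — monotonicity from join-preservation, identifying the categorical inverse with $f^{-1}$, and the direct verification of the symmetry identity — is routine. As an alternative to the final step, once $f$ is known to be symmetrizable one could instead invoke Proposition~\ref{prop o-morph}: preservation of $\olap$ in both directions supplies conditions $1$ and $4$ there, giving $f^\dagger f x\leq x$ and $x\leq f^\dagger f x$, so $f^\dagger f=\mathrm{id}$, and symmetrically $f f^\dagger=\mathrm{id}$, whence again $f^\dagger=f^{-1}$.
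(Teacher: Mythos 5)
Your proof is correct, and it reaches the conclusion by a slightly different route than the paper. Both arguments must establish the one non-order-theoretic fact, namely that an order-isomorphism respects $\Pos$. You do this abstractly: transport $\Pos_L$ along $f^{-1}$ to get a predicate on $M$ satisfying \eqref{eq monotonicity}--\eqref{eq positivity}, and invoke the uniqueness of the positivity predicate (a fact the paper records and itself uses in this way when characterizing atoms). The paper instead proves the needed implication $\Pos(f(x))\Rightarrow\Pos(x)$ concretely: write $f(x)=\bigvee\{f(z)\ |\ \Pos(z)\land(z\leq x)\}$ using \eqref{eq positivity bis} and join-preservation, then apply splitting and monotonicity. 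The two assemblies also differ: the paper feeds meet-preservation and the $\Pos$-implications into the four conditions of Proposition~\ref{lemma-ff}, whereas you verify the defining biconditional $f(x)\olap y\Leftrightarrow x\olap f^{-1}(y)$ directly from $\Pos(x\wedge x')\Leftrightarrow\Pos(f(x)\wedge f(x'))$. Your version is arguably cleaner conceptually but rests on the uniqueness claim being available; the paper's is more self-contained, deriving everything from the axioms at hand. Your treatment of the forward direction (identifying the categorical inverse with $f^{-1}$ and using that symmetrizable maps preserve joins, hence are monotone) spells out what the paper dismisses as trivial, and is fine. The alternative ending via Proposition~\ref{prop o-morph} also works, since $f^\dagger\circ f=\mathrm{id}$ together with bijectivity of $f$ already forces $f^\dagger=f^{-1}$.
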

\begin{proof}
One direction is trivial because all arrows in {\bf OA} are monotone functions.
Conversely, let $f$ be an order-isomorphism; we claim that $f$ is symmetrizable and $f^\dagger=f^{-1}$. As $f$ and $f^{-1}$ preserve binary meets, items 3 and 4 of propositions \ref{lemma-ff} hold; it remains to be shown that $\Pos(f(x))$ implies $\Pos(x)$, and similarly for $f^{-1}$. This follows from the fact that $f$ and $f^{-1}$ preserve joins. Indeed, by \eqref{eq positivity bis}, we have $f(x)$ = $\bigvee\{f(z)\ |\ \Pos(z)\land (z\leq x)\}$. So if $\Pos(f(x))$ holds, then $\Pos(f(z))$ holds for some $z\leq x$ with $\Pos(z)$. In particular, also $\Pos(x)$ holds.
\end{proof}

So $L$ and $M$ are isomorphic in {\bf OA} if and only if they are isomorphic as posets. As a corollary, a join-preserving bijection between o-algebras is always symmetrizable\footnote{In this case, $f$ is an order-isomorphism because $f^{-1}$ preserves joins as well, for $f^{-1}(\bigvee_i y_i)$ = $f^{-1}(\bigvee_i ff^{-1}y_i)$ = $f^{-1}f(\bigvee_i f^{-1}y_i)$ = $\bigvee_i f^{-1}y_i$.} (compare with Remark \ref{remark:symmetrizable}).

\begin{remark}
If $ f$ is an isomorphism in {\bf OA}, then $fx_{1} \olap fx_{2}\Leftrightarrow x_{1} \olap x_{2}$
holds true by Proposition \ref{prop o-morph}.
\end{remark}

\begin{proposition}\label{prop.mono}
Let $ m: L  \rightarrow  M  $ be an arrow in {\bf OA}. Then $m$ is a monomorphism if and only if $m$ is injective.
\end{proposition}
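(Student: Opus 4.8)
The plan is to prove the two implications separately. The direction ``injective $\Rightarrow$ monomorphism'' is the routine half: every arrow of $\mathbf{OA}$ is in particular a function, and composition in $\mathbf{OA}$ is ordinary function composition, so an injective $m$ cancels on the left. Explicitly, $m\circ f=m\circ g$ forces $m(f(n))=m(g(n))$ for every $n$, hence $f(n)=g(n)$, hence $f=g$. No special feature of o-algebras is used here; it is just the standard fact that injections are monos in any concrete category.

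For the converse ``monomorphism $\Rightarrow$ injective'' I would use $\Omega$ as a probe object and represent each element of $L$ by an arrow $\Omega\to L$. Remark \ref{remark f to Omega} supplies, for every $a\in L$, the symmetrizable map $(\_\olap a):L\to\Omega$ whose symmetric $\hat a:\Omega\to L$ is $p\mapsto\bigvee\{x\in L\mid (x=a)\wedge p\}$. Being the symmetric of a symmetrizable map, $\hat a$ is a genuine arrow of $\mathbf{OA}$, and it satisfies $\hat a(1)=a$; thus the assignment $a\mapsto\hat a$ is injective, since one recovers $a$ by evaluating at the top of $\Omega$.

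The crux is then to check that post-composition with $m$ remembers only the image $m(a)$. Given $x_1,x_2\in L$ with $m(x_1)=m(x_2)$, I would compute, using that $m$ preserves joins,
\[
m\bigl(\hat{x_i}(p)\bigr)\;=\;m\Bigl(\textstyle\bigvee\{z\mid (z=x_i)\wedge p\}\Bigr)\;=\;\textstyle\bigvee\{m(x_i)\mid p\},
\]
so that $m\circ\hat{x_1}$ and $m\circ\hat{x_2}$ are literally the same function $\Omega\to M$. Since $m$ is a monomorphism this yields $\hat{x_1}=\hat{x_2}$, and evaluating at $1$ gives $x_1=x_2$, as desired.

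The step I expect to require the most care is the middle one: confirming that $\hat x$ really lands in $\mathbf{OA}$ (which is guaranteed by Remark \ref{remark f to Omega}, as it is the symmetric of a symmetrizable map) and that the displayed identity holds \emph{constructively}. The join $\bigvee\{m(x_i)\mid p\}$ must be read as a join indexed by the truth value $p$, never as a case split on whether $p$ holds, so that no appeal to LEM creeps in; once this is granted, the two composites coincide on the nose and everything else is bookkeeping with the definition of monomorphism.
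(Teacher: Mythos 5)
Your proof is correct and follows essentially the same route as the paper: both represent an element $a\in L$ by the probe arrow $\Omega\to L$ supplied by Remark \ref{remark f to Omega} (your $\hat a$ is exactly the paper's $f_a^\dagger$) and use it to test the monomorphism $m$. The only difference is bookkeeping --- the paper obtains $m\circ f_a^\dagger=m\circ f_b^\dagger$ by first computing $f_a\circ m^\dagger=f_b\circ m^\dagger$ and applying the dagger functor, then concludes $a=b$ by density, whereas you verify $m\circ\hat{x_1}=m\circ\hat{x_2}$ directly from join-preservation and recover $x_i$ by evaluating at $1\in\Omega$; both steps are sound and constructively unobjectionable.
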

\begin{proof}
If $m$ is an injective function, then it is trivially a monomorphism.

Conversely, assume $ m(a) = m(b) $ with $a,b\in L$. Let $f_a(x)$ and $f_b(x)$ be the truth values of $x\olap a$ and $x\olap b$, respectively. In view of Remark \ref{remark f to Omega}, $f_a$ and $f_b$ are two o-morphisms from $L$ to $\Omega$. Now $f_a m^\dagger y$ = $(m^\dagger y\olap a)$ = $(y\olap ma)$ = $(y\olap mb)$ = $(m^\dagger y\olap b)$ = $f_b m^\dagger y$. So $f_a\circ m^\dagger$ = $f_b\circ m^\dagger$, and hence $m\circ f_a^\dagger$ = $m\circ f_b^\dagger$. Since $m$ is mono, we get $f_a^\dagger$ = $f_b^\dagger$; therefore $f_a$ = $f_b$, that is, $a=b$ (by density).
\end{proof}

Note that a join-preserving map $f:L \rightarrow  M $ between posets is injective if and only if $\forall_f\circ f=id_{L}$ (apply the triangular identity $f\circ\forall_f\circ f=f $). Similarly, $f$ is surjective if and only if $f\circ\forall_f=id_{M}$.

\begin{proposition}
Let $f: L  \rightarrow  M $ be an arrow in {\bf OA}; then
\begin{enumerate}
    \item $f$ is an epimorhism if and only if $f^\dagger$ is a monomorphism;
    \item if $f$ is surjective, then $f$ is an epimorphism;
    \item classically, if $f$ is an epimorphism, then $f$ is surjective.
\end{enumerate}
\end{proposition}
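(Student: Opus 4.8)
The plan is to treat the three parts separately, exploiting the dagger structure for part~1, the concreteness of the category for part~2, and a chain of equivalences for part~3.

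For part~1 I would use nothing but the general behaviour of the involution $(\_)^\dagger$. Since $(\_)^\dagger:\mathbf{OA}^{op}\to\mathbf{OA}$ is a contravariant involution which restricts to a bijection on each hom-set, the equation $g\circ f=h\circ f$ is equivalent to $(g\circ f)^\dagger=(h\circ f)^\dagger$, that is $f^\dagger\circ g^\dagger=f^\dagger\circ h^\dagger$, while $g=h$ is equivalent to $g^\dagger=h^\dagger$. As $g,h$ range over all arrows out of $M$, the daggers $g^\dagger,h^\dagger$ range over all arrows into $M$; rewriting the right-cancellation property of $f$ under this bijection yields exactly the left-cancellation property of $f^\dagger$. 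Hence $f$ is an epimorphism if and only if $f^\dagger$ is a monomorphism. This step is purely formal and uses no hypothesis beyond the dagger structure.

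For part~2 I would simply recall that arrows in $\mathbf{OA}$ are functions and that composition is function composition. If $f$ is surjective and $g\circ f=h\circ f$ for arrows $g,h:M\to N$, then for any $y\in M$ I pick $x\in L$ with $f(x)=y$; evaluating gives $g(y)=g(f(x))=h(f(x))=h(y)$, so $g=h$, i.e.\ $f$ is an epimorphism. No further structure is needed. For part~3, the delicate one and the place where classical logic enters, I would run the chain: $f$ epi $\Leftrightarrow$ $f^\dagger$ mono (by part~1) $\Leftrightarrow$ $f^\dagger$ injective (by Proposition~\ref{prop.mono}), and then convert injectivity of $f^\dagger$ into surjectivity of $f$. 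Since $f$ preserves joins it has a right adjoint $\forall_f$, and classically $f^\dagger(y)=-\forall_f(-y)$; as complementation is an involutive bijection in a complete Boolean algebra, $f^\dagger$ is injective if and only if $\forall_f$ is injective. Finally the triangular identity $\forall_f\circ f\circ\forall_f=\forall_f$ gives $\forall_f(f(\forall_f y))=\forall_f(y)$ for every $y$, so injectivity of $\forall_f$ forces $f(\forall_f y)=y$, that is $f\circ\forall_f=id_M$, which is precisely surjectivity of $f$ as already noted before this proposition.

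I expect the main obstacle to be pinpointing exactly where LEM is indispensable in part~3: the two steps passing through $f^\dagger(y)=-\forall_f(-y)$ and through the complementation bijection are the genuinely classical ones, whereas the existence of the adjoint $\forall_f$, the triangular identity, and the equivalence between surjectivity and $f\circ\forall_f=id_M$ are all available constructively. The other thing to watch is making sure that no hidden use of excluded middle creeps into the ``purely formal'' parts~1 and~2, so that the constructive content of the statement is cleanly isolated in part~3.
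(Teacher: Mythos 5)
Your proposal is correct and follows essentially the same route as the paper: part~1 by the formal dagger-category argument, part~2 by pointwise evaluation, and part~3 by combining $f^\dagger(y)=-\forall_f(-y)$ (valid under LEM), the resulting injectivity of $\forall_f$, and the triangular identity $\forall_f\circ f\circ\forall_f=\forall_f$ to get $f\circ\forall_f=id_M$. Your closing remarks on where LEM enters also match the paper, which immediately afterwards gives a Brouwerian counterexample showing that ``epi implies surjective'' cannot be proved constructively.
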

\begin{proof}
Item 1 holds in any dagger category; 2 is trivial.

Let $f$ be epi, and assume LEM. Then $f^\dagger$ is injective, and $f^\dagger(y)$ = $-\forall_f(-y)$. Therefore also $\forall_f$ is injective, for $\forall_f y_1$ = $\forall_f y_2$ iff $-f^\dagger(-y_1)$ = $-f^\dagger(-y_2)$ iff $f^\dagger(-y_1)$ = $f^\dagger(-y_2)$ iff $-y_1$ = $-y_2$ iff $y_1$ = $y_2$. But $\forall_f\circ f\circ\forall_f$ = $\forall_f$ (triangular identity); hence $f(\forall_f y)$ = $y$ for all $y\in M$; so $f$ is surjective.
\end{proof}

It is possible to construct a \emph{Brouwerian} counterexample to the fact that epic implies surjective. Let us consider a topological space $(X, \tau) $ in which the closure operator $\textit{cl}$ is the identity $\textit{id}$ (see section \ref{remark:symmetrizable}). 
Let $f:\P(\tau)\rightarrow\P(X)$ be the map $\lbrace A_{i}\rbrace_{i\in I} \mapsto \cup_{i\in I}A_{i} $; it is symmetrizable and $f^\dagger(Y)$ = $\lbrace A\in \tau\vert Y\olap A\rbrace $.\footnote{This is a consequence of proposition \ref{prop f from powersets}. Here is a direct proof: $f(\lbrace A_{i}\rbrace _{i\in I})\olap Y$ iff $(\bigcup_{i\in I}A_i)\olap Y$ iff $(\exists i\in I)(A_{i}\olap Y)$ iff $(\exists i\in I)(A_{i}\in f^\dagger( Y))$ iff $\lbrace A_{i}\rbrace _{i\in I}\olap f^\dagger( Y)$.} Now $f^\dagger$ is injective, because $f^\dagger(Y_1)$ = $f^\dagger(Y_2)$ iff $(\forall A\in \tau)(Y_1\olap A \Leftrightarrow Y_2\olap A)$ iff $\cl Y_1 = \cl Y_2$ iff $Y_1=Y_2$. In other words, $f^\dagger$ is monic and so $f$ is epic. However, if $f$ were surjective, then every $Y\subseteq X$ would be open. In view of this, if the implication ``$f$ epi $\Rightarrow$ $f$ surjective'' were true, then also ``$\textit{cl}=\textit{id}\Rightarrow \textit{int}=\textit{id}$'' would be true, which is an intuitionistic ``taboo" (see footnote \ref{counterexample} on page \pageref{counterexample}).

\begin{proposition}\label{prop mono-epi}
If $m$ is a mono in {\bf OA}, then the following hold identically:
\begin{enumerate}
    \item if $x_1\olap x_2$, then $m x_1\olap m x_2$;
    \item if $\Pos(x)$, then $\Pos(mx)$;
    \item $m^\dagger 1=1$;
    \item $x\leq m^\dagger m x$.
\end{enumerate}
Symmetrically, If $e$ is an epi in {\bf OA}, then the following hold identically:
\begin{enumerate}
    \item if $y_1\olap y_2$, then $e^\dagger y_1\olap e^\dagger y_2$;
    \item if $\Pos(y)$, then $\Pos(e^\dagger y)$;
    \item $e1=1$;
    \item $y\leq ee^\dagger y$.
\end{enumerate}
\end{proposition}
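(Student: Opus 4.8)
The plan is to exploit Proposition~\ref{prop o-morph}: the four assertions in the monomorphism half of the statement are \emph{verbatim} its items 4, 5, 6 and 7 (read with $f=m$), and these are proved there to be equivalent for any symmetrizable map. Hence it suffices to establish a single one of them, and I would aim at item~3, namely $m^\dagger 1=1$, which looks the most tractable. The epimorphism half then comes for free by dagger duality: since $e$ is an epimorphism exactly when $e^\dagger$ is a monomorphism (proved above), and since $(e^\dagger)^\dagger=e$, the four epi-conditions are literally the four mono-conditions applied to the monomorphism $e^\dagger$ (for instance mono-item~3 for $e^\dagger$ reads $(e^\dagger)^\dagger 1=e1=1$, which is epi-item~3). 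So the whole proposition reduces to the single claim that \emph{a monomorphism $m$ satisfies $m^\dagger 1=1$}.

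To prove $m^\dagger 1=1$ I would first turn the categorical hypothesis into a concrete one: by Proposition~\ref{prop.mono}, $m$ being mono means $m$ is injective, and since $m$ preserves all joins it has a right adjoint $\forall_m$ with $\forall_m\circ m=\mathrm{id}_L$ (the triangular-identity remark following Proposition~\ref{prop.mono}). The bridge to the dagger is the corollary to Proposition~\ref{lemma-ff}, whose first inequality gives $mx\leq m(x\wedge m^\dagger 1)$ for every $x$. Taking $x=1$ and applying the monotone map $\forall_m$ then yields $1=\forall_m(m1)\leq\forall_m\big(m(m^\dagger 1)\big)=m^\dagger 1$, using $\forall_m m=\mathrm{id}$ twice; since $m^\dagger 1\leq 1$ holds trivially, this gives $m^\dagger 1=1$, and Proposition~\ref{prop o-morph} closes out items 1, 2 and 4.

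The one genuine difficulty is precisely this bridging step. Injectivity only directly controls the order-adjoint $\forall_m$, whereas all four target conditions are statements about the \emph{symmetric} $m^\dagger$, and the equivalences of Proposition~\ref{prop o-morph} make every naive attempt to pass from $\forall_m$ to $m^\dagger$ circular (each such passage turns out to be equivalent to the very conclusion one is after). The inequality $mx\leq m(x\wedge m^\dagger 1)$ supplied by the corollary is exactly what breaks the circularity, since it ties $m$, $m^\dagger$ and a meet together in one formula that survives application of $\forall_m$; everything else is routine.
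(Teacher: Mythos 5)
Your proposal is correct and takes essentially the same route as the paper: both arguments extract the inequality $mx\wedge y\leq m(x\wedge m^\dagger y)$ from Proposition~\ref{lemma-ff}, specialize to $x=y=1$ to obtain $m1\leq m(m^\dagger 1)$, and then use injectivity of the mono $m$ (you via $\forall_m\circ m=\mathrm{id}$, the paper by cancelling $m$ directly from $m1=mm^\dagger 1$) to conclude $m^\dagger 1=1$, after which Proposition~\ref{prop o-morph} yields the remaining items and dagger duality handles the epi half.
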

\begin{proof}
Recall from Proposition \ref{lemma-ff} that $mx\wedge y\leq m(x\wedge m^\dagger y)$ for all $x$ and $y$. In particular, $m1\leq mm^\dagger 1$ and hence $m1=mm^\dagger 1$. If $m$ is a mono, that is, it is injective, then $m^\dagger 1=1$, which is item 6 of Proposition \ref{prop o-morph}. 
The second part follows by applying the same argument to $e^\dagger$.
\end{proof}

\subsection{Limits and co-limits in OA}

Limits and  colimits in a dagger category are mutually closely related: an object $C$ together with arrows $\alpha_i:A_i\to C$ is the colimit of a diagram $f_{i,j}^k: A_i\to A_j$ if and only if the same $C$ together with ${\alpha_i}^\dagger:C\to A_i$ is the limit of the diagram $(f_{i,j}^k)^\dagger: A_j\to A_i$.


\begin{lemma}
Let $\{L_i\}_{i\in I}$ be a family of o-algebras. Then the set-theoretic product $\Pi_{i\in I} L_i$ is an o-algebra with respect to pointwise joins and meets, and $\Pos(f)$ holds in $\Pi_{i\in I} L_i$ if and only if $\Pos(f_i)$ holds in some $L_i$. 
\end{lemma}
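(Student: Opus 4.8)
The plan is to put on the product the componentwise-existential positivity predicate and then verify the three defining conditions of an o-algebra. First I would record the lattice-theoretic bookkeeping: with pointwise order, joins and meets, $\Pi_{i\in I}L_i$ is a complete lattice in which $(\bigvee Y)_i=\bigvee\{f_i\mid f\in Y\}$ and $(f\wedge g)_i=f_i\wedge g_i$, and it is a frame because the distributive law is verified one coordinate at a time from the $L_i$. By the criterion recorded just after the definition of an o-algebra, it then suffices to exhibit a predicate $\Pos$ on the product satisfying \eqref{eq monotonicity}, \eqref{eq splitting} and \eqref{eq prop Pos}. I would take the candidate $\Pos(f):\equiv(\exists i\in I)\,\Pos(f_i)$; this is exactly the formula asserted in the second half of the statement, so once the candidate is shown to be a genuine positivity predicate making the product an o-algebra, uniqueness of the positivity predicate settles that half automatically.

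The two easy axioms I would dispatch coordinatewise. For \eqref{eq monotonicity}, if $f\leq g$ then $f_i\leq g_i$ for every $i$, so a witness $i$ for $\Pos(f)$ is, by monotonicity in $L_i$, also a witness for $\Pos(g)$. For \eqref{eq splitting}, a witness $i$ for $\Pos(\bigvee Y)$ gives $\Pos(\bigvee\{f_i\mid f\in Y\})$, and splitting in $L_i$ produces some $f\in Y$ with $\Pos(f_i)$, whence $\Pos(f)$.

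The heart of the proof is \eqref{eq prop Pos}. Assuming $(\forall h)\big(\Pos(h\wedge f)\Rightarrow\Pos(h\wedge g)\big)$, I would fix an index $i$ and reduce $f_i\leq g_i$ to \eqref{eq prop Pos} inside $L_i$, i.e.\ to the implication $\Pos(a\wedge f_i)\Rightarrow\Pos(a\wedge g_i)$ for an arbitrary $a\in L_i$. The test element to feed into the hypothesis is the ``point mass'' $e_i(a)$, whose $i$-th coordinate is $a$ and whose other coordinates are $0$; this is a legitimate element of the product, definable componentwise as the subsingleton join $e_i(a)_j=\bigvee\{t_q(a)\mid q\text{ a proof of }i=j\}$ (with $t_q:L_i\to L_j$ the transport along $q$), equivalently as the value at $a$ of the left adjoint to the $i$-th projection $\pi_i$, which exists since $\pi_i$ preserves all meets. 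The crucial point is the localization equivalence $\Pos\big(e_i(a)\wedge h\big)\Leftrightarrow\Pos(a\wedge h_i)$: granting it, the hypothesis applied with $h=e_i(a)$ transports $\Pos(a\wedge f_i)$ to $\Pos(a\wedge g_i)$, and \eqref{eq prop Pos} in $L_i$ yields $f_i\leq g_i$; letting $i$ vary gives $f\leq g$.

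The step I expect to be the main obstacle is exactly this localization equivalence, carried out constructively, i.e.\ without assuming decidable equality on $I$. The forward direction is immediate, taking $i$ as the splitting witness and using that $\Pos(0)$ is always false (since $\Pos(x)$ entails $x\neq0$). For the converse one cannot simply ``read off'' that the witnessing coordinate equals $i$; instead I would distribute the binary meet $e_i(a)_j\wedge h_j$ over the subsingleton join defining $e_i(a)_j$, apply \eqref{eq splitting} in $L_j$ to extract an \emph{actual} proof $q$ that $i=j$, and then use that $t_q$ is an order-isomorphism (hence preserves $\Pos$) together with $h_j=t_q(h_i)$ to transport $\Pos(a\wedge g_j)$ back to $\Pos(a\wedge g_i)$. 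Once this coordinatewise localization is secured, the remaining verifications are routine, and we conclude both that $\Pi_{i\in I}L_i$ is an o-algebra and that its positivity predicate is the componentwise-existential one.
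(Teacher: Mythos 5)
Your proof is correct and follows essentially the same route as the paper's: the paper likewise reduces \eqref{eq prop Pos} to each coordinate by feeding in the subsingleton ``point mass'' $h_i=\bigvee\{x\in L_i\ |\ (i=k)\land(x=z)\}$, which is exactly your $e_k(z)$. The only difference is that you spell out the localization equivalence $\Pos(e_i(a)\wedge h)\Leftrightarrow\Pos(a\wedge h_i)$, which the paper passes over with ``by assumption we then have\dots''; your constructive justification of it (extracting an actual witness of $i=j$ via \eqref{eq splitting} and transporting, rather than assuming decidable equality on $I$) is precisely the right way to fill that in.
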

\begin{proof}
Let us check \eqref{eq prop Pos}, the other properties being clear. Given $f$ and $g$, assume $\Pos(h\wedge f)\Rightarrow\Pos(h\wedge g)$ for all $h$. For any given $k\in I$ and $z\in L_k$, let us define $h$ as $h_i=\bigvee\{x\in L_i\ |\ (i=k)\land(x=z)\}$. By assumption we then have $\Pos(z\wedge f_k)\Rightarrow\Pos(z\wedge g_k)$, for all $k\in I$ and for all $z\in L_k$. So $f_k\leq g_k$ by \eqref{eq prop Pos} in $L_k$, for all $k\in I$; therefore $f\leq g$.
\end{proof}

\begin{proposition}
The category $ \mathbf{OA}$ has arbitrary products (and coproducts).
\end{proposition}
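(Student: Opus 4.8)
The plan is to take as product object the set-theoretic product $P=\Pi_{i\in I}L_i$, which the preceding lemma has already shown to be an o-algebra under pointwise operations. The projections $\pi_i:P\to L_i$, $f\mapsto f_i$, preserve pointwise joins, so they are the natural candidates for the structure arrows. Once $(P,\{\pi_i\}_{i\in I})$ is established as a product in $\mathbf{OA}$, the coproduct comes for free: since $\mathbf{OA}$ is a dagger category and a discrete diagram is self-dual, the dagger duality recalled at the start of this subsection turns the limiting cone $(P,\{\pi_i\})$ into a colimiting cocone $(P,\{\pi_i^\dagger\})$. Thus it suffices to treat products.

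First I would check that each $\pi_i$ is symmetrizable, exhibiting its symmetric explicitly. Let $\iota_i:L_i\to P$ send $y$ to the tuple that equals $y$ in coordinate $i$ and $0$ elsewhere. Using the lemma's description of $\Pos$ on $P$, the condition $\pi_i(f)\olap y$, i.e.\ $\Pos(f_i\wedge y)$, is equivalent to $f\olap\iota_i(y)$, since the only coordinate of $f\wedge\iota_i(y)$ that can be positive is the $i$-th. Hence $\pi_i^\dagger=\iota_i$ and each $\pi_i$ is an o-morphism.

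Next I would verify the universal property. Given an o-algebra $M$ and o-morphisms $g_i:M\to L_i$, define the tupling $g:M\to P$ by $g(m)=(g_i(m))_{i\in I}$; it preserves joins because joins in $P$ are computed coordinatewise. The crucial point is that $g$ is symmetrizable, with symmetric $g^\dagger(f)=\bigvee_{i\in I}g_i^\dagger(f_i)$: indeed $g(m)\olap f$ holds iff $\Pos(g_j(m)\wedge f_j)$ for some $j$ (by the lemma), iff $g_j(m)\olap f_j$ for some $j$, iff $m\olap g_j^\dagger(f_j)$ for some $j$, iff $m\olap\bigvee_{j}g_j^\dagger(f_j)$, where the last step is exactly the splitting of joins in $M$. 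By construction $\pi_i\circ g=g_i$, and any o-morphism $h$ with $\pi_i\circ h=g_i$ for all $i$ must satisfy $(h(m))_i=g_i(m)$ coordinatewise, hence $h=g$; this gives uniqueness.

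The one step that requires care --- and the only real obstacle --- is the verification that the tupling $g$ is symmetrizable rather than merely join-preserving: classically every join-preserving map is symmetrizable, but constructively this can fail (see Remark \ref{remark:symmetrizable}), so the explicit formula $g^\dagger=\bigvee_{i}g_i^\dagger\circ\pi_i$ must be produced and checked by hand. The check rests precisely on the two features supplied by the lemma and by the o-algebra structure of $M$: the coordinatewise splitting of $\Pos$ on $P$, and the splitting of joins in $M$ that lets one pass from $m\olap g_j^\dagger(f_j)$ for some $j$ to $m\olap\bigvee_{j}g_j^\dagger(f_j)$.
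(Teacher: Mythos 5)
Your proof is correct and follows essentially the same route as the paper's: the pointwise product supplied by the preceding lemma, projections whose symmetrics are the coordinate insertions, the tupling with $g^\dagger(f)=\bigvee_{i\in I}g_i^\dagger(f_i)$ verified via the coordinatewise description of $\Pos$ and splitting of joins, and coproducts for free by dagger duality. One small constructive point: since equality on the index set $I$ need not be decidable, the insertion $\iota_k(z)$ should not be defined by cases (``$z$ at coordinate $k$ and $0$ elsewhere'') but rather by the formula $\iota_k(z)_i=\bigvee\{x\in L_i\ |\ (i=k)\land(x=z)\}$, as the paper does; with this definition your computation that $\pi_k(f)\olap z$ iff $f\olap\iota_k(z)$ (and everything after it) goes through verbatim, using splitting of joins to conclude $i=k$ from $\Pos\bigl((f\wedge\iota_k(z))_i\bigr)$.
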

\begin{proof}
We claim that $\Pi_{i\in I} L_i$, as defined in the previous lemma, is the product of the family of o-algebras $\{L_i\}_{i\in I}$. Let $\pi_k$ be the $k$-th projection, and define ${\pi_k}^\dagger(z)$ to be the function $i\mapsto\bigvee\{x\in L_i\ |\ (i=k)\land(x=z)\}$. Then $f\olap{\pi_k}^\dagger(z)$ iff $f_i\olap \bigvee\{x\in L_i\ |\ (i=k)\land(x=z)\}$ for some $i\in I$ iff $f_k\olap z$ iff $\pi_k(f)\olap z$. Therefore the set-theoretic projections are o-moprhisms.

Let $\{g_i:M\to L_i\}_{i\in I}$ be a family of morphisms in {\bf OA}. We claim that there exists a unique morphism $h:M\to\Pi_{i\in I} L_i$ such that $\pi_i\circ h$ = $f_i$ for all $i$. The only possible candidate for $h$ is the mapping $x\mapsto h(x)$ with $h(x)_i=g_i(x)$. Let us check that it is symmetrizable with  $h^\dagger (f)$ = $\bigvee_{i\in I}{g_i}^\dagger(f_i)$. We have that $h(x)\olap f$ $\Leftrightarrow$ $h(x)_i\olap f_i$ for some $i\in I$ $\Leftrightarrow$ $g_i(x)\olap f_i$ for some $i\in I$ $\Leftrightarrow$ $x\olap {g_i}^\dagger (f_i)$ for some $i\in I$ $\Leftrightarrow x\olap \bigvee_{i\in I}{g_i}^\dagger(f_i)$ $\Leftrightarrow$ $x\olap h^\dagger(f)$. 
\end{proof}

Note that $\P(\emptyset)$ is a zero object (both initial and terminal), because given an arbitrary o-algebra $ L  $, there exists a unique morphism $f: \P(\emptyset)\rightarrow L $, namely, $f(\emptyset) = 0$ ($f$ has to preserve joins); and $f$ is the symmetric of the unique function $g:L\to\P(\emptyset)$, namely, the constant function with value $\emptyset$ (both $\emptyset \olap x$ and $\emptyset \olap gx$ are always false).

\paragraph{The category OA is not complete.}

A category $\mathcal{C}$ is $ \textit{complete} $ if it has all (small) limits. It is well-known that a category with all (small) products is complete if and only if it has equalizers. We are going to show that {\bf OA} does not have equalizers, in general, hence it is not complete. This fact is independent from LEM, that is, {\bf OA} is not complete even classically, as we now see.

Recall that, classically, {\bf OA} is the category of complete Boolean algebras and join-preserving maps. Let us consider the complete Boolean algebras $\Omega=\P(1)\cong 2=\{0,1\}$ and $L=\{0,a,-a,1\}\cong\P(2)$. Let $f,g:L\to 2$ be two maps defined by $f(0)=g(0)=0$, $f(1)=g(1)=1$, $f(a)=g(a)=1$, $f(-a)=0$ and $g(-a)=1$. Clearly both $f$ and $g$ preserves joins. We claim that there is no equalizer of $f$ and $g$. By way of contradiction let us assume that $e:E\to L$ is the equalizer of $f$ and $g$. In particular $e$ is mono, that is, injective; and $-a$ is not in the image of $e$. Therefore, up to isomorphism we have only two possibilities for $E$, namely, $E=1=\P(\emptyset)$, and $E=2$. In particular, the image of $e$ contains at most two elements.
Now consider the function $t:L\to L$ define by $t(0)= 0$, $t(a)= a$, $t(1)= 1=t(-a)$. It is easy to check that $t$ preserves joins and that $f\circ t=g\circ t$. So there must exist (a unique) $h:L \rightarrow E$ such that $e\circ h=t$. This is impossible because the image of $t$ contains three elements.

\begin{remark}
The argument above shows a case in which a $\textit{weak}$ equalizer exists ($t:L\to L$ is a weak equalizer of $f$ and $g$ because any other $h:X\to L$ with $fh=gh$ factors through $t$, actually $h=th$). And weak equalizers always exist in {\bf Rel}.
So it is natural to ask whether {\bf OA} has weak equalizers as well: this is an open problem.
\end{remark}
 
The functor $\P:\mathbf{Rel}^{op}\rightarrow \mathbf{OA} $ preserves (co)products. Indeed, it is well-known that  (co)products in $ \mathbf{Rel} $ are given by disjoint unions; and the powerset of a disjoint union $ \Sigma_{i\in I} X_{i} $ is the set-theoretic product of the powersets of the $X_i$'s, that is, $\P(\Sigma_{i\in I} X_{i})$ = $\Pi_{i\in I} \P(X_{i})$.

\section{Overlap-frames and overlap-locales}\label{section OFrm}

From now on, we restrict our attention to o-morphisms $f$ that preserve finite meets (note that $f^\dagger$ need not preserve finite meets). Let {\bf OFrm} be the corresponding subcategory of {\bf OA}. So {\bf OFrm} is also a subcategory of the category {\bf Frm} of frames, hence the name. Note that the functor $\P$ restricts to a fucntor $\mathbf{Set}^{op}\to\mathbf{OFrm}$ because $R^{-1}$ preserves finite intersections if and only if $R$ is a function.

\begin{lemma}\label{lemma-fff}
Let $ f:L \rightarrow  M  $ be a morphism in {\bf OA}; then the following are equivalent:
\begin{enumerate}
\item $f$ preserves finite meets;
\item $f^\dagger\dashv f$.
\end{enumerate}
\end{lemma}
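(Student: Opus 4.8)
The plan is to prove both implications by exploiting the adjunction machinery together with the symmetry relation $f(x)\olap y\Leftrightarrow x\olap f^\dagger(y)$. Recall that a join-preserving $f$ always has a right adjoint, so the content of the statement $f^\dagger\dashv f$ is that the \emph{specific} left adjoint of $f$ is the conjugate $f^\dagger$, equivalently that $f^\dagger(y)\leq x \Leftrightarrow y\leq f(x)$ for all $x,y$. I would first record the consequences of Proposition~\ref{lemma-ff} and Proposition~\ref{prop o-morph} that are available: in particular item~3 of Proposition~\ref{lemma-ff} gives $f(x)\wedge y\leq f(x\wedge f^\dagger y)$, and the corollary after it gives $f^\dagger y\leq f^\dagger f f^\dagger y$ and $fx\leq ff^\dagger fx$. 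The key density tool throughout is that $a\leq b$ in an o-algebra can be checked by showing $z\olap a\Rightarrow z\olap b$ for all $z$.

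For the direction $1\Rightarrow 2$, I would show $f^\dagger(y)\leq x \Leftrightarrow y\leq f(x)$. The easy inequality is one of the triangular-type facts: if $y\leq f(x)$ then, using that $f$ preserving finite meets gives (by Proposition~\ref{prop o-morph}, items $1$--$3$) the inequality $f^\dagger f x\leq x$, I would deduce $f^\dagger y\leq f^\dagger f x\leq x$. For the reverse, assuming $f^\dagger y\leq x$, I would apply $f^\dagger$-to-$f$ monotonicity and the corollary $y\leq ff^\dagger y$ (which holds since $f^\dagger$ is a morphism and item~7 of Proposition~\ref{prop o-morph} applies to $f^\dagger$, whose own conjugate is $f$); combining $y\leq ff^\dagger y\leq fx$ closes the unit/counit loop. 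The crux here is that meet-preservation of $f$ is exactly what supplies $f^\dagger f x\leq x$, i.e.\ item~3 of Proposition~\ref{prop o-morph}, so this direction is essentially a repackaging of that equivalence into adjunction language.

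For the direction $2\Rightarrow 1$, I would assume $f^\dagger\dashv f$ and derive $f(x_1\wedge x_2)=fx_1\wedge fx_2$. Since $f$ preserves joins it preserves order, so $f(x_1\wedge x_2)\leq fx_1\wedge fx_2$ is automatic; the content is the reverse inequality. Here I would use the adjunction to rewrite $fx_1\wedge fx_2\leq f(x_1\wedge x_2)$ via its transpose. Concretely, by the adjunction $f^\dagger(fx_1\wedge fx_2)\leq x_1\wedge x_2$ would suffice, and I expect this to reduce to establishing $f^\dagger fx\leq x$, namely item~3 of Proposition~\ref{prop o-morph}, which is equivalent to meet-preservation. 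In practice I would verify $f^\dagger fx\leq x$ directly from $f^\dagger\dashv f$: the counit of the adjunction is exactly the inequality $f^\dagger f x\leq x$, and then invoke the equivalence $3\Rightarrow 2$ of Proposition~\ref{prop o-morph} to conclude that $f$ preserves binary meets (preservation of the top element $f(1)=1$, i.e.\ the nullary meet, should be checked separately or noted to follow since $f$ has a left adjoint and $1$ is the top).

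I expect the main obstacle to be bookkeeping rather than conceptual: making sure the counit/unit inequalities of $f^\dagger\dashv f$ are matched correctly against items~3 and~7 of Proposition~\ref{prop o-morph}, and being careful that the \emph{nullary} meet (preservation of $\top$) is handled, since ``preserves finite meets'' includes $f(1)=1$, whereas the adjunction transpose arguments most naturally deliver only the binary case. Once the dictionary ``$f^\dagger\dashv f$ $\Leftrightarrow$ counit $f^\dagger f\leq \mathrm{id}$ $\Leftrightarrow$ item~3 of Proposition~\ref{prop o-morph} $\Leftrightarrow$ $f$ preserves binary meets'' is set up cleanly, both directions collapse to short applications of results already proved.
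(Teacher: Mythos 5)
Your proposal follows essentially the same route as the paper: the paper's proof is exactly the observation that, by Proposition~\ref{prop o-morph}, the counit inequality $f^\dagger f x\leq x$ is equivalent to preservation of binary meets and the unit inequality $y\leq ff^\dagger y$ is equivalent to $f(1)=1$, so that the adjunction $f^\dagger\dashv f$ (both maps being monotone) is precisely the conjunction of the two halves of ``preserves finite meets''. Your argument is correct in substance, but one step is mis-justified as written: in the direction $1\Rightarrow 2$ you assert $y\leq ff^\dagger y$ ``since $f^\dagger$ is a morphism and item~7 of Proposition~\ref{prop o-morph} applies to $f^\dagger$''. Item~7 does \emph{not} hold for an arbitrary o-morphism; applied to $f^\dagger$ it is equivalent (via item~6 for $f^\dagger$) to $(f^\dagger)^\dagger(1)=f(1)=1$, which is exactly the nullary-meet half of hypothesis~1 --- so the hypothesis is what makes item~7 available, and this should be said. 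Relatedly, your closing ``dictionary'' identifying $f^\dagger\dashv f$ with the counit $f^\dagger f\leq\mathrm{id}$ alone is too coarse: the counit accounts only for binary meets, while the unit --- equivalently $f(1)=1$ --- is an independent ingredient needed in both directions, not merely a loose end of the direction $2\Rightarrow 1$. With those two points repaired, your proof is complete and coincides with the paper's.
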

\begin{proof}
By Proposition \ref{prop o-morph}, $f$ preserves binary meets iff $f^\dagger f x\leq x$, and $f1=1$ iff $y\leq ff^\dagger y$.
\end{proof}

A frame homomorphism $f$ is {\bf open} \cite{3d,johnstone-open} if it has a left adjoint $ \exists_{f} $ which satisfies \emph{Frobenius reciprocity condition} $ \exists_{f} (f(x) \wedge y) = x\wedge\exists_{f}(y)$.
Equivalently, $f$ is open if it preserves the Heyting implication and arbitrary meets. For instance, the unique frame homomorphism $!:\Omega\to L$ is open if and only if the frame $L$ is overt (in which case $\exists_!=\Pos$).

All arrows $f$ in {\bf OFrm} are open with $\exists_f=f^\dagger$; actually the following, more general, result holds.

\begin{proposition}\label{prop morph oFrm}
Let $f:L\to M$ be a function between o-algebras. Then the following are equivalent:
\begin{enumerate}
    \item $f$ is symmetrizable and preserves finite meets;
    \item $f$ is an open frame homomorphism;
    \item $f$ preserves all joins and meets, and the Heyting implication.
\end{enumerate}
\end{proposition}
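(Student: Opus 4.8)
The plan is to run the cycle $1\Rightarrow 2\Rightarrow 3\Rightarrow 1$, using as black boxes Lemma~\ref{lemma-fff} (preservation of finite meets is equivalent to $f^\dagger\dashv f$), Proposition~\ref{lemma-ff}, Proposition~\ref{prop o-morph}, the earlier fact that symmetrizable maps and left adjoints preserve all joins, and the standard open-map characterization recalled just before the statement (a frame homomorphism is open precisely when it preserves the Heyting implication and all meets). Throughout I write $\exists_f$ for the left adjoint of $f$ whenever it exists.

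For $1\Rightarrow 2$: a symmetrizable $f$ preserves all joins, and by hypothesis also finite meets, so $f$ is a frame homomorphism; by Lemma~\ref{lemma-fff} we have $f^\dagger\dashv f$, so $\exists_f:=f^\dagger$ is the required left adjoint. It then remains to verify Frobenius, $f^\dagger(f(x)\wedge y)=x\wedge f^\dagger(y)$. The inequality ``$\leq$'' is pure adjunction bookkeeping: $f^\dagger(f(x)\wedge y)\leq f^\dagger f(x)\leq x$ by monotonicity and Proposition~\ref{prop o-morph}(3), while $f^\dagger(f(x)\wedge y)\leq f^\dagger(y)$ by monotonicity; the reverse inequality ``$\geq$'' is exactly item~4 of Proposition~\ref{lemma-ff}. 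Hence $f$ is an open frame homomorphism.

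For $2\Rightarrow 3$: an open frame homomorphism preserves all joins and finite meets by being a frame homomorphism, and preserves arbitrary meets and the Heyting implication by the recalled characterization of openness; this is precisely~(3). For $3\Rightarrow 1$: preservation of all meets includes finite meets, so the only real content is symmetrizability. Since $f$ preserves all meets it has a left adjoint $\exists_f$, and since $f$ also preserves the Heyting implication it is open, so $\exists_f$ satisfies Frobenius. I then check the four conditions of Proposition~\ref{lemma-ff} for the pair $(f,\exists_f)$: conditions~1 and~2 follow from the observation (footnote~\ref{remark Posf}) that $\Pos(h(w))\Rightarrow\Pos(w)$ for any join-preserving $h$, applied to $f$ and to the left adjoint $\exists_f$ (left adjoints preserve joins); condition~3 follows from the unit and counit of the adjunction exactly as in the ``$\leq$'' computation above; and condition~4 is one half of Frobenius. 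Proposition~\ref{lemma-ff} then yields that $f$ and $\exists_f$ are symmetric, so $f$ is symmetrizable with $f^\dagger=\exists_f$, giving~(1).

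The crux is the Frobenius reciprocity condition: it is the single place where a classical complementation argument is unavailable, and constructively it must be supplied by preservation of the Heyting implication through the cited open-map characterization. Concretely, Frobenius is what feeds item~4 of Proposition~\ref{lemma-ff} in the $3\Rightarrow 1$ step (and, read backwards, is what must be verified in $1\Rightarrow 2$); everything else reduces to adjunction identities and the already-established preservation of joins and reflection of positivity.
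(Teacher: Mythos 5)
Your proof is correct, and it follows the paper's overall plan (the cycle $1\Rightarrow2\Rightarrow3\Rightarrow1$ with Frobenius reciprocity as the crux), but the two nontrivial implications are executed differently. For $1\Rightarrow2$ the paper establishes Frobenius in one stroke by density: it shows $z\olap f^\dagger(fx\wedge y)$ iff $z\olap x\wedge f^\dagger y$ for all $z$, using only the symmetry condition and preservation of binary meets; you instead split Frobenius into two inequalities, getting ``$\leq$'' from the counit $f^\dagger f x\leq x$ (Proposition~\ref{prop o-morph}(3)) and monotonicity, and ``$\geq$'' from item~4 of Proposition~\ref{lemma-ff}. For $3\Rightarrow1$ the paper again argues by a single overlap computation, whose only nonformal ingredient is the identity $\Pos_M=\Pos_L\circ\exists_f$ obtained from initiality of $\Omega$ in \textbf{Frm} ($!_M=f\circ{!_L}$, then take left adjoints); you instead verify the four conditions of Proposition~\ref{lemma-ff} for the pair $(f,\exists_f)$, importing reflection of positivity from footnote~\ref{remark Posf} and the order-theoretic halves of Frobenius and the adjunction. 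Both routes are constructively sound; the paper's computations are shorter and more self-contained, while yours makes visible exactly which of the previously catalogued properties (the unit, the counit, the two Frobenius inequalities, positivity reflection) carries each step, at the cost of leaning on Proposition~\ref{lemma-ff} as a black box. One small imprecision: in $3\Rightarrow1$ you say condition~3 of Proposition~\ref{lemma-ff} follows ``exactly as in the `$\leq$' computation above''; it actually needs the unit $y\leq f\exists_f y$ together with preservation of binary meets by $f$ (giving $fx\wedge y\leq fx\wedge f\exists_f y=f(x\wedge\exists_f y)$), which is a dual, not identical, computation --- but the ingredients you cite do suffice.
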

\begin{proof}
$(1\Rightarrow 2)$: $z\olap f^\dagger (f x\wedge y)$ iff $fz\olap f x\wedge y$ iff $fz\wedge f x\olap y$ iff $f(z\wedge x)\olap y$ iff $z\wedge x\olap f^\dagger y$ iff $z\olap x\wedge f^\dagger y$.

$(2\Rightarrow 3)$: well-known. 

$(3\Rightarrow 1)$: $fx\olap y$ iff $\Pos_M(fx\wedge y)$ iff\footnote{$\Pos_M$ = $\Pos_L\circ\exists_f$ because $!_M$ = $f\circ !_L$.} $\Pos_L\exists_f(fx\wedge y)$ iff $\Pos_L(x\wedge\exists_f y)$ iff $x\olap\exists_f y$.
\end{proof}

So \textbf{OFrm} is the category of o-algebras and open frame homomorphisms. Classically, \textbf{OFrm} is just the category {\bf cBa} of complete Boolean algebras.

\subsection{Subobjects, equalizers and completeness of {\bf OFrm}}

Proposition \ref{prop.mono} holds for {\bf OFrm} too, because the morphisms $f_a^\dagger$ and $f_b^\dagger$ which appear in that proof preserve finite meets. Therefore, every monomorphism $m$ in {\bf OFrm} is injective; by triangular identity, this is equivalent to the equation $m^\dagger\circ m=id$; and this happens precisely $x_1\olap x_2$ $\Leftrightarrow$ $m x_1\olap m x_2$ for all $x_1$, $,x_2$ (see Propositions \ref{prop o-morph} and \ref{prop mono-epi}).

Let $f:L \rightarrow  M $ be any arrow in {\bf OFrm}. Then the set-theoretic image $f[L]$ = $\{f(x)\ |\ x\in L\}$ is a sub-frame of $M$, and it is an o-algebra where  $ fx_{1}\olap_{f[L] } fx_{2}$ is defined as $x_1\olap_L x_2$. Note that the symmetric of the inclusion $ \textit{i}: f[L] \rightarrow  M  $ is given by $ i^\dagger (y)=ff^\dagger (y) $ because
\begin{center}
$ifx\olap_{ M } y\Leftrightarrow fx\olap_{ M } y \Leftrightarrow x\olap_{ L } f^\dagger y\Leftrightarrow  fx\olap_{f[L]} ff^\dagger y$.
\end{center}
Clearly if $m$ is monic, then $m[L]$ is isomorphic to $L$. 

\begin{proposition}
Let $  M  $ be an o-algebra and let $N\subseteq M  $. Then the following are equivalent:
\begin{enumerate}
    \item $N$ = $m[L]$ for some mono $m:L\to M$ in {\bf OFrm};
    \item $N$ is closed under all joins and meets, and the Heyting implication.
\end{enumerate} 
\end{proposition}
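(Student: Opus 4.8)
The plan is to prove the two implications separately, using Proposition~\ref{prop morph oFrm} to translate between the categorical language of monos in \textbf{OFrm} and the concrete closure conditions on $N$.

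For the direction $(1\Rightarrow 2)$, suppose $N=m[L]$ for a mono $m:L\to M$ in \textbf{OFrm}. By Proposition~\ref{prop morph oFrm}, since $m$ is a symmetrizable meet-preserving map, $m$ preserves all joins and meets and the Heyting implication; so its image $m[L]$ is closed under exactly these operations \emph{as computed in $L$}. The one subtlety to address is that closure under an operation means the value computed in $M$ lands back in $N$, not merely that $L$ has such an operation. Here I would use that $m$ is injective (Proposition~\ref{prop.mono} holds in \textbf{OFrm}, as already noted) together with the fact that $m^\dagger\circ m=\mathrm{id}_L$: given a family $\{mx_i\}\subseteq N$, the join $\bigvee_i mx_i$ computed in $M$ equals $m(\bigvee_i x_i)$ because $m$ preserves joins, and hence lies in $N$; the same works for meets and for Heyting implication, using that $m$ preserves $\wedge$ and $\Rightarrow$. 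Thus $N$ satisfies condition~2.

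For the converse $(2\Rightarrow 1)$, suppose $N\subseteq M$ is closed under all joins and meets and the Heyting implication. Then $N$ is itself a frame (it inherits the frame operations from $M$, with the same joins and meets, and is a complete Heyting algebra since it is closed under implication and arbitrary meets). I would equip $N$ with the overlap relation restricted from $M$, i.e.\ declare $x\olap_N y$ to mean $x\olap_M y$ for $x,y\in N$, and check that this makes $N$ an o-algebra; density is the delicate axiom, and I expect it to follow because $N$ contains enough elements (it is closed under the Heyting implication and arbitrary meets, hence under pseudocomplementation relative to $N$) to separate distinct elements via the overlap relation inherited from the ambient o-algebra $M$. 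Once $N$ is known to be an o-algebra, the inclusion map $i:N\hookrightarrow M$ preserves all joins and meets and the Heyting implication, so by Proposition~\ref{prop morph oFrm} it is a morphism in \textbf{OFrm}; and it is injective, hence (by Proposition~\ref{prop.mono}, valid in \textbf{OFrm}) a monomorphism. Taking $L=N$ and $m=i$ gives $N=m[L]$, establishing condition~1.

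The main obstacle is verifying that $N$, with the restricted overlap relation, genuinely forms an o-algebra---specifically that the density axiom~\eqref{eq prop Pos} holds \emph{internally} to $N$. The danger is that the quantifier ``$\forall z$'' in density ranges only over $z\in N$, which is a weaker hypothesis than ranging over all $z\in M$; so one cannot simply quote density in $M$. The key point I would exploit is that $N$ is closed under arbitrary meets and Heyting implication, so for each $y\in N$ the pseudocomplement-type elements needed to witness density for $x,y\in N$ already live in $N$; more carefully, the elements $z$ of the form appearing in the positivity characterization~\eqref{eq positivity bis} can be taken inside $N$ because $N$ is a subframe closed under the relevant operations. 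This is the step where the precise closure hypotheses of condition~2 must be used in full, and I would spell out that closure under Heyting implication (equivalently, under the frame-theoretic structure that makes $i$ an \emph{open} map) is exactly what guarantees density descends to $N$.
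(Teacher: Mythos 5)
Your direction $(1\Rightarrow 2)$ is fine and matches the paper's (which simply calls it easy). The problem is in $(2\Rightarrow 1)$, at exactly the point you flag as the main obstacle: you never actually verify the density axiom \eqref{eq prop Pos} for $N$, and the mechanism you gesture at would not work. Appealing to ``pseudocomplement-type elements'' is the classical route --- taking $z$ to be a complement of $y$ to witness density is precisely what the notion of an overlap algebra is designed to avoid --- and \eqref{eq positivity bis} gives no handle on the universally quantified $z$ in \eqref{eq prop Pos}. The difficulty is real: the hypothesis you get to use quantifies only over $z\in N$, while density in $M$ requires information about all $t\in M$, so you must transport an arbitrary $t\in M$ into $N$ without losing positivity information in either direction; nothing in your sketch accomplishes this.

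The paper's argument does it with the left adjoint $\exists_i$ of the inclusion $i:N\to M$, which exists because $N$ is closed under arbitrary meets (explicitly $\exists_i t=\bigwedge\{n\in N\ |\ t\leq n\}$) and which satisfies Frobenius reciprocity because $i$ also preserves the Heyting implication, i.e.\ $i$ is an open frame homomorphism. Concretely: if $t\in M$ and $\Pos(t\wedge x)$, then $t\leq i\exists_i t$ gives $\Pos(\exists_i t\wedge x)$; since $\exists_i t\in N$, your hypothesis yields $\Pos(\exists_i t\wedge y)$; Frobenius reciprocity turns this into $\Pos(\exists_i(t\wedge y))$; and since $\exists_i$ is a join-preserving map between overt frames it reflects positivity (footnote \ref{remark Posf}), so $\Pos(t\wedge y)$. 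Density in $M$ then gives $x\leq y$. This is where the two closure hypotheses are genuinely used --- arbitrary meets for the adjoint to exist, Heyting implication for Frobenius --- and it is the step missing from your proposal.
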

\begin{proof}
$(1\Rightarrow 2)$:   easy, since an open frame homomorphism $m$ preserve all joins and meets, and the implication.

$(2\Rightarrow 1)$: by assumption, the inclusion map $ \textit{i}: N \rightarrow  M  $ is an open frame homomorphism, and $\exists_i\circ i$ = $id_N$ because $i$ is injective. We claim that $N$ is an o-algebra, with respect to (the restriction of) the positivity predicate of $M$. The only thing that needs to be checked is \eqref{eq prop Pos}. Given $x,y\in N$,  assume $\Pos(z\wedge x)\Rightarrow \Pos(z\wedge y)$ for all $z\in N$; we must show that $x\leq y$. By \eqref{eq prop Pos} in $M$, it is enough to check that $\Pos(t\wedge x)\Rightarrow \Pos(t\wedge y)$ for all $t\in L$. If $\Pos(t\wedge x)$, then also $\Pos(\exists_i t\wedge x)$ because $\exists_i\vdash i$ and $t\leq i\exists_i t$. By assumption we get $\Pos(\exists_i t\wedge y)$, and hence $\Pos(\exists_i(t\wedge y))$ by Frobenius reciprocity. Since $\exists_i:M\to N$ is a join preserving function between overt frames (see footnote \ref{remark Posf} on page \pageref{remark Posf}), we obtain $\Pos(t\wedge y)$.
\end{proof}

From a classical point of view, of course, $ N$ is a subobject of $  M  $ if and only if  $N$ is a sub-cBa of $M $.

\begin{proposition}
The category {\bf OFrm} is complete.
\end{proposition}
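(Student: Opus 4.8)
The plan is to invoke the standard category-theoretic criterion that a category is complete as soon as it has all small products and equalizers. Products in $\mathbf{OFrm}$ come almost for free: I would first check that the product $\Pi_{i\in I} L_i$ constructed earlier for $\mathbf{OA}$ lives in $\mathbf{OFrm}$, i.e. that the projections $\pi_k$ preserve finite meets (they clearly do, since meets are pointwise) and that the mediating morphism $h$ built from a family of meet-preserving $g_i$ again preserves finite meets (again immediate, pointwise). Hence $\mathbf{OFrm}$ has all small products, carried by the same underlying construction as in $\mathbf{OA}$. This reduces the whole problem to the construction of equalizers.

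For equalizers I would exploit the subobject characterization just proved: a subset $N\subseteq M$ is (up to iso) the image of an $\mathbf{OFrm}$-mono exactly when $N$ is closed under all joins, all meets, and the Heyting implication. So given parallel arrows $f,g:L\to M$ in $\mathbf{OFrm}$, I would set
\begin{equation*}
E=\{x\in L\ |\ f(x)=g(x)\}
\end{equation*}
and verify that $E$ is closed under arbitrary joins and meets and under the Heyting implication. Closure under joins and meets is routine since $f$ and $g$ both preserve all joins and all meets (Proposition \ref{prop morph oFrm}), so on any family drawn from $E$ the two maps agree on the join and on the meet. Closure under $\Rightarrow$ is the one place where openness is genuinely used: because $f$ and $g$ are open frame homomorphisms they both preserve the Heyting implication, so for $x,y\in E$ we get $f(x\Rightarrow y)=f(x)\Rightarrow f(y)=g(x)\Rightarrow g(y)=g(x\Rightarrow y)$, whence $x\Rightarrow y\in E$. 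By the subobject proposition, $E$ therefore carries the structure of an o-algebra with the restricted positivity predicate, and the inclusion $e:E\hookrightarrow L$ is a mono in $\mathbf{OFrm}$.

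It then remains to check the universal property. By construction $f\circ e=g\circ e$. Given any $\mathbf{OFrm}$-arrow $h:X\to L$ with $f\circ h=g\circ h$, the image of $h$ lands inside $E$ pointwise (for each $x$, $f(h(x))=g(h(x))$ means $h(x)\in E$), so $h$ corestricts to a function $\bar h:X\to E$ with $e\circ\bar h=h$; since $E$ inherits joins, meets and implication from $L$ and $e$ reflects them, $\bar h$ is again an open frame homomorphism, hence an $\mathbf{OFrm}$-arrow, and it is unique because $e$ is monic. This gives the equalizer, and completeness follows.

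The main obstacle I anticipate is not the formal verification but ensuring everything is done \emph{constructively}: in particular I must make sure that $E$ is a legitimate subset (which it is, being carved out by an equality of values) and that the inclusion is genuinely symmetrizable in the o-algebra sense, rather than merely a frame embedding. The subobject proposition already does this delicate work for me --- it upgrades ``closed under joins, meets and $\Rightarrow$'' to ``image of an $\mathbf{OFrm}$-mono'' and verifies condition \eqref{eq prop Pos} for the restricted positivity predicate using Frobenius reciprocity --- so the real content of the proof is simply checking the three closure conditions for the equalizer set $E$ and then reading off the universal property. I expect the argument to be short once that proposition is cited.
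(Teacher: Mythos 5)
Your proof takes exactly the paper's route: the paper likewise reduces completeness to products (inherited from $\mathbf{OA}$, noting the projections preserve finite meets) and equalizers, and takes the equalizer of $f,g$ to be $E=\{x\in L\ |\ fx=gx\}$ with the inclusion, declaring the verification ``straightforward.'' Your write-up correctly supplies the details the paper omits, in particular the closure of $E$ under joins, meets and the Heyting implication and the appeal to the subobject characterization to see that the inclusion is a legitimate $\mathbf{OFrm}$-arrow.
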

\begin{proof}
The construction of products and equalizers is straightforward. Indeed, if $\{L_i\}_{i\in I}$ is a family of o-algebra, then the product $\Pi_{i\in I}L_i$ in {\bf OA} works as a product in {\bf OFrm} as well (projections $\pi_{i}$'s preserve finite meets). And if $f,g:L\to M$ are two parallel arrows in {\bf OFrm}, then $E=\{x\in L\ |\ fx=gx\}$, together with the inclusion $e:E\to L$, is the equalizer of $f$ and $g$. 
\end{proof}

In general  $\mathbf{OFrm}$ does not have co-products, even classically, because $\mathbf{cBa}$ does not have co-products, in general, as it is well-known. Indeed, this is a consequence of the Gaifman-Hales-Solovay Theorem \cite{solovay} that there is no free complete boolean algebra on countably many generators. 

\subsection{Sublocales of overlap algebras}

Let $\mathbf{Loc}$ = $\mathbf{Frm}^{op}$ be the category of locales (see \cite{3b} and \cite{picado} for a comprehensive treatment of locale theory). A sublocale of $L$ is a regular subobjects in {\bf Loc}, that is, a quotient of $L$ in $\mathbf{Frm}$. It is well known  that sublocales of $L$ have, up to isomorphism, the form $L_{j}$ = ${\lbrace jx\ |\ x\in L \rbrace}$ where $j$ is a {\bf nucleus}, that is, a function $j:L\rightarrow L$ such that
\begin{enumerate}
\item $x\leq j(x) = j(j(x))$ for all $x\in L$, and
\item $j(x\wedge y)= j(x) \wedge j(y)$ for all $x,y\in L$.
\end{enumerate}

By definition, an {\bf open sublocale} is given by a nucleus of the form $j(x)= a\to x$, for $a\in L$, where $\to$ is the Heyting implication in $L$. It is well known \cite{3d} that a sublocale $m:L_j\to L$ is open if and only if the corresponding frame epimorphism $m^*:L\to L_j$, with $m^*(x)=j(x)$, is open. Moreover, $\exists_{m^*}(jx)=a\wedge x$ and $a=\exists_{m^*}(1)$.







A sublocale $L_j$ is Boolean, that is, it is a complete Boolean algebra if and only if $j$ is of the form $j(x)=(x\to a)\to a$ for some $a\in L$. If $L$ itself is Boolean, then every sublocale $L_j$ of $L$ is Boolean, because $j(x)=(x\to j(0))\to j(0)$ holds identically in that case. What happens if we replace complete Boolean algebras with overlap algebras? 

\begin{proposition}
Every open sublocale of an overlap algebra is an overlap algebra.
\end{proposition}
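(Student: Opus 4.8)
The plan is to identify the open sublocale with a principal down-set and then observe that such down-sets inherit the o-algebra structure. Write $j(x)=a\to x$ for the open nucleus, so that $L_j=\{x\in L\mid a\to x=x\}$, and recall from the discussion above that the associated frame surjection $m^*=j:L\to L_j$ has a left adjoint $\exists_{m^*}$ with $\exists_{m^*}(jx)=a\wedge x$. I first claim that $L_j$ is order-isomorphic to ${\downarrow a}=\{x\in L\mid x\leq a\}$. The two maps are $\psi:L_j\to{\downarrow a}$, $\psi(y)=a\wedge y$ (the restriction of $\exists_{m^*}$, which lands in ${\downarrow a}$), and $\phi:{\downarrow a}\to L_j$, $\phi(y)=a\to y$ (the restriction of $m^*$; note $a\to(a\to y)=a\to y$, so this indeed lands in $L_j$). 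Both are monotone, and they are mutually inverse: for $y\leq a$ one has $\psi\phi(y)=a\wedge(a\to y)=a\wedge y=y$, while for $y\in L_j$ one has $\phi\psi(y)=a\to(a\wedge y)=(a\to a)\wedge(a\to y)=a\to y=y$. Hence $L_j\cong{\downarrow a}$ as frames.

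Next I show that ${\downarrow a}$ is itself an o-algebra, taking as positivity predicate the restriction of $\Pos_L$. That ${\downarrow a}$ is overt for this restriction was already observed earlier (if $L$ is overt and $a\in L$, then ${\downarrow a}$ is overt with respect to the restriction of $\Pos$); indeed joins in ${\downarrow a}$ agree with those of $L$, so \eqref{eq monotonicity}, \eqref{eq splitting} and \eqref{eq positivity} are inherited verbatim. The only substantial point is \eqref{eq prop Pos}. So let $x,y\leq a$ and assume $\Pos(z\wedge x)\Rightarrow\Pos(z\wedge y)$ for every $z\leq a$; I must deduce $x\leq y$. For an arbitrary $z\in L$ put $z'=z\wedge a\leq a$; since $x,y\leq a$ we have $z\wedge x=z'\wedge x$ and $z\wedge y=z'\wedge y$, so the hypothesis applied at $z'$ yields $\Pos(z\wedge x)\Rightarrow\Pos(z\wedge y)$ for this arbitrary $z$. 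Thus the premise of \eqref{eq prop Pos} holds in $L$ for $x,y$, whence $x\leq y$.

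Finally, being an o-algebra is a property of the underlying frame: the positivity predicate is unique and determined by the order, and \eqref{eq prop Pos} is purely order-theoretic. It therefore transports along the order-isomorphism $L_j\cong{\downarrow a}$, so $L_j$ is an o-algebra, the transported predicate being $\Pos_{L_j}(y)=\Pos_L(a\wedge y)$. The main obstacle is precisely the verification of \eqref{eq prop Pos}: the remaining axioms become immediate once one works in ${\downarrow a}$ rather than directly in $L_j$ (where joins are computed through $j$), and the decisive step is the reduction of an arbitrary test element $z$ to $z\wedge a$, which is available exactly because $x$ and $y$ lie below $a$.
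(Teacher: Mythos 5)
Your proof is correct, but it takes a genuinely different route from the paper's. The paper works directly with the frame surjection $m^*=j$ and its left adjoint $\exists_{m^*}$: it defines $\Pos_{L_j}=\Pos_L\circ\exists_{m^*}$, uses Frobenius reciprocity to rewrite $\Pos_{L_j}(jz\wedge jx)$ as $\Pos_L(z\wedge\exists_{m^*}jx)$, applies density in $L$ to get $\exists_{m^*}jx\leq\exists_{m^*}jy$, and then transposes along the adjunction to conclude $jx\leq jy$. You instead exploit the explicit form $j(x)=a\to x$ of an open nucleus to exhibit an order-isomorphism $L_j\cong{\downarrow a}$ via $y\mapsto a\wedge y$ and $y\mapsto a\to y$, and then verify \eqref{eq prop Pos} for the principal down-set ${\downarrow a}$ by the clean reduction $z\mapsto z\wedge a$ of an arbitrary test element. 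Your Heyting-algebra computations ($a\wedge(a\to y)=a\wedge y$ and $a\to(a\wedge y)=a\to y$) are correct, the overtness of ${\downarrow a}$ under the restricted $\Pos$ is indeed already noted in the paper, and the transported predicate $\Pos_{L_j}(y)=\Pos_L(a\wedge y)$ coincides with the paper's $\Pos_L(\exists_{m^*}y)$. What your approach buys is concreteness: it identifies the open sublocale as the down-set ${\downarrow a}$ and reduces the proposition to the (independently useful) fact that principal down-sets of o-algebras are o-algebras. What the paper's approach buys is uniformity: it never invokes the formula $j=a\to(-)$, arguing purely through the adjunction $\exists_{m^*}\dashv m^*$ and Frobenius reciprocity, so it sits closer to the general machinery of open maps used elsewhere in the section (e.g.\ in Proposition \ref{prop morph oFrm}).
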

\begin{proof}
Given $m:L_{j}\rightarrow L$ with $L$ an o-algebra and $m$ open, we claim that $L_{j}$ is an o-algebra with respect to the positivity predicate $\Pos_{L_{j}}=\Pos_{L}\circ \exists_{m^{*}}$. 
$$\xymatrix{
L_j\ar@/^/[rr]^{\exists_{m^*}} \ar@<1ex>@/^2pc/[rrrr]^{Pos_{L_j}} & & L \ar@/^/[rr]^{Pos_L} \ar@/^/[ll]^{m^*} & & \Omega\ar@/^/[ll]^{!_l^*} \ar@<1ex>@/^2pc/[llll]^{!_{L_j}^*}
}$$
In order to prove that $L_{j}$ is an o-algebra we must check that \eqref{eq prop Pos} holds for $L_j$, namely 
\begin{center}
$\forall z. \left[\Pos_{L_{j}}(jz\wedge jx)\Rightarrow\Pos_{L_{j}}(jz\wedge jy)\right]    \Longrightarrow jx\leq jy\ .$
\end{center}
Now $\Pos_{L_{j}}(jz\wedge jx)$ can be rewritten as $\Pos_{L}\exists_{m^{*}}(m^*z\wedge jx)$, and hence as $\Pos_{L}(z\wedge\exists_{m^{*}}jx)$; similarly for $y$ in place of $x$. So the antecedent becomes $\forall z. [(z\olap \exists_{m^{*}}jx)\Rightarrow(z\olap \exists_{m^{*}}jy)]$, that is, $\exists_{m^{*}}jx\leq \exists_{m^{*}}jy$. This is equivalent to $jx\leq m^{*}\exists_{m^{*}}jy$ = $m^{*}\exists_{m^{*}}m^*y$ = $m^*y$ = $jy$. 
\end{proof}

Discrete locales, that is, powersets regarded as locales, are overlap algebras (and they are Boolean if and only if LEM holds). More generally, we have the following.

\begin{lemma}
Every overt sublocale of a discrete locale is open (as a sublocale).
\end{lemma}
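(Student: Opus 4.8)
The plan is to pass to the frame side. A sublocale of the discrete locale $\P(X)$ is given by a nucleus $j$, and by the characterisation recalled above the sublocale is open exactly when the associated frame surjection $m^{*}:\P(X)\to L_{j}$, $m^{*}(A)=jA$, is an open frame homomorphism. Note that $m^{*}$ already preserves finite meets, since $j(A\cap B)=jA\cap jB$ and $jX=X$ are part of the definition of a nucleus; so by Proposition~\ref{prop morph oFrm} it would suffice to show that $m^{*}$ is symmetrizable (equivalently, by Lemma~\ref{lemma-fff}, that $m^{*}$ has a left adjoint $\exists_{m^{*}}=(m^{*})^{\dagger}$).

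To produce the symmetric I would exploit that the domain $\P(X)$ is atomic. Since $m^{*}$ is a frame homomorphism it preserves all joins, so the construction of Proposition~\ref{prop f from powersets} applies with the merely overt frame $L_{j}$ in the codomain (that proof only uses \eqref{eq splitting} and \eqref{eq monotonicity} in the codomain, together with $jA=\bigvee_{x\in A}j\{x\}$): the candidate symmetric is
\[
(m^{*})^{\dagger}(b)=\{x\in X\ |\ \Pos_{L_{j}}(j\{x\}\wedge b)\}\ ,
\]
and the symmetry equivalence $m^{*}(A)\olap b\Leftrightarrow A\olap(m^{*})^{\dagger}(b)$ is checked by distributing $jA\wedge b=\bigvee_{x\in A}(j\{x\}\wedge b)$ and applying \eqref{eq splitting} in $L_{j}$. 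Granting this, the counit $(m^{*})^{\dagger}m^{*}A\leq A$ and Frobenius $(m^{*})^{\dagger}(m^{*}A\wedge b)=A\wedge(m^{*})^{\dagger}b$ follow from the density computations of Propositions~\ref{prop o-morph} and~\ref{prop morph oFrm}, all of which take place in $\P(X)$ and so do not need $L_{j}$ to be an o-algebra.

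The hard part is the other half of the adjunction, the unit $b\leq m^{*}(m^{*})^{\dagger}b$ for $b\in L_{j}$. Tracing it through the explicit formula, this inequality turns out to be exactly an instance of the density axiom \eqref{eq prop Pos} for $L_{j}$: writing $b=\bigvee_{x\in b}j\{x\}$ and using positivity \eqref{eq positivity bis} together with \eqref{eq splitting}, one shows that every inhabited $c\leq b$ in $L_{j}$ satisfies $\Pos_{L_{j}}\!\big(c\wedge m^{*}(m^{*})^{\dagger}b\big)$, which is precisely the antecedent of \eqref{eq prop Pos}. Thus the whole lemma reduces to the statement that the overt frame $L_{j}$ is genuinely an o-algebra. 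I expect this to be the crux: one must upgrade overtness to density, and the natural way is to read off the open datum $a=(m^{*})^{\dagger}(1)=\{x\in X\ |\ \Pos_{L_{j}}(j\{x\})\}$ and verify that $j$ is the open nucleus $B\mapsto a\to B$, the positivity of the singletons $j\{x\}$ being what pins down $a$.

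Alternatively, a purely localic argument bypasses these computations. Since $\P(X)$ is discrete its diagonal is an open map, and since $L_{j}$ is overt the projection $L_{j}\times\P(X)\to\P(X)$ is open, being a pullback of the open map $L_{j}\to 1$. The sublocale inclusion $L_{j}\to\P(X)$ factors as the graph $L_{j}\to L_{j}\times\P(X)$ followed by this projection; the graph is a pullback of the (open) diagonal, hence open, so the inclusion is a composite of open maps and therefore open. This route relies only on pullback-stability of open maps and on discreteness giving an open diagonal (see \cite{johnstone-open}), and is the cleaner proof if those standard facts are taken for granted.
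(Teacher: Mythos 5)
Your first route stalls at exactly the point where the lemma lives. You correctly set up the candidate symmetric $(m^{*})^{\dagger}(b)=\{x\in X\mid \Pos(j\{x\}\wedge b)\}$, and the symmetry equivalence and the counit do go through using only \eqref{eq splitting} and meet-preservation of $j$, as you say. But you then observe that the unit $b\leq m^{*}(m^{*})^{\dagger}b$ amounts to density in $L_{j}$, declare this ``the crux'', and propose to settle it by ``verifying that $j$ is the open nucleus $B\mapsto a\to B$'' --- which, by the paper's definition of open sublocale, \emph{is} the statement being proved. So the reduction is circular as written and the decisive step is never carried out. What is missing is precisely the one place where axiom \eqref{eq positivity} must be used: with $P=\{x\in X\mid\Pos(j\{x\})\}$, the inclusion $P\to U\subseteq jU$ is obtained by noting that $x\in P\to U$ gives $\Pos(j\{x\})\Rightarrow x\in U\Rightarrow j\{x\}\subseteq jU$, whence $j\{x\}\subseteq jU$ by \eqref{eq positivity} applied in $L_{j}$, i.e.\ $x\in jU$; the reverse inclusion $jU\subseteq P\to U$ uses $j\{x\}\cap jU=j(\{x\}\cap U)$ and \eqref{eq splitting}. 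This is the paper's entire proof, it is only a few lines, and it bypasses the adjunction bookkeeping altogether; you had all the ingredients ($P$ is your $a$) but never deployed \eqref{eq positivity}, which is the only non-formal input.

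Your ``alternatively'' paragraph, by contrast, is a correct and genuinely different proof: the sublocale inclusion factors as the graph (a pullback of the diagonal of $\P(X)$, which is open because $\P(X)\times\P(X)\cong\P(X\times X)$ and the diagonal is a subset, hence an open part) followed by the projection $L_{j}\times\P(X)\to\P(X)$ (a pullback of the open map $L_{j}\to 1$), and open maps of locales are closed under pullback and composition. This argument actually proves more --- any locale map from an overt locale into a locale with open diagonal is open --- and is the ``right'' conceptual explanation. Its cost is that pullback-stability of open maps and the discreteness of products of discrete locales are nontrivial constructive facts that the paper does not develop and that you must import wholesale from \cite{johnstone-open}; the paper's elementary computation is self-contained and exhibits the open element $P$ explicitly, which is then reused implicitly in the corollary. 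If you keep the localic route, state those two imported facts as explicit citations; if you keep the computational route, you must actually prove $jU=P\to U$ rather than gesture at it.
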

\begin{proof}
Let $j$ be a nucleus on $\P(X)$ such that the corresponding sublocale is overt with positivity predicate $\Pos$. Let $P$ be $\{x\in X\ |\ \Pos(j\{x\})\}$. We claim that $jU=P\to U$. Indeed, if $x\in jU$, then $j\{x\}=j(\{x\}\cap U)$; if also $x\in P$, then $\Pos(j(\{x\}\cap U))$, and hence $\{x\}\cap U$ is inhabited, that is, $x\in U$. Conversely, if $x\in P\to U$, then $\Pos(j\{x\})\Rightarrow (x\in U)$; so $\Pos(j\{x\})\Rightarrow (j\{x\}\subseteq jU)$; by overtness, $j\{x\}\subseteq jU$, that is, $x\in j U$.
\end{proof}

\begin{corollary}
Overt sublocales of discrete locales are overlap algebras.
\end{corollary}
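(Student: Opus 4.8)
The plan is to chain together the two results that immediately precede the statement. The corollary introduces nothing genuinely new: it is the composition of the Lemma asserting that every overt sublocale of a discrete locale is open with the Proposition asserting that every open sublocale of an overlap algebra is again an overlap algebra.

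First I would recall that a discrete locale is, by definition, a powerset $\P(X)$ regarded as a locale, and that $\P(X)$ is an overlap algebra, with $\Pos(U)$ meaning ``$U$ is inhabited'', as already observed above. Then, given an overt sublocale $L_j$ of $\P(X)$, the preceding Lemma guarantees that $L_j$ is in fact an \emph{open} sublocale, since under the overtness hypothesis the nucleus necessarily takes the form $jU = P \to U$ with $P = \{x \in X \mid \Pos(j\{x\})\}$.

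Finally, because $L_j$ is now exhibited as an open sublocale of the overlap algebra $\P(X)$, I would invoke the Proposition that every open sublocale of an overlap algebra is an overlap algebra to conclude. The only point worth stressing is that there is no real obstacle here: all the content lies in establishing the Lemma (that in the discrete case overtness forces openness) and in the earlier Proposition, and the corollary is simply their juxtaposition.
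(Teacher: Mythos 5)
Your proposal matches the paper's own proof, which is exactly the one-line combination of the preceding lemma (overt sublocales of discrete locales are open) with the proposition that open sublocales of overlap algebras are overlap algebras, using the fact that powersets are overlap algebras. No gaps; this is the intended argument.
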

\begin{proof}
By the previous proposition and lemma.
\end{proof}


\begin{proposition}
For $L$ a locale, there is a bijection between sublocales of $L$ which are overlap algebras and join-preserving maps $L\to\Omega$.
\end{proposition}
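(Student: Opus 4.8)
The plan is to exhibit the bijection explicitly in both directions and verify they are mutually inverse. Given a sublocale $L_j$ of $L$ which happens to be an o-algebra, it carries a positivity predicate $\Pos_{L_j}$, and I would associate to it the map $\varphi_j:L\to\Omega$ defined by $\varphi_j(x)=\Pos_{L_j}(jx)$. The first task is to check that this $\varphi_j$ preserves joins. Monotonicity is immediate since $j$ is monotone and $\Pos_{L_j}$ is. For joins, I would compute $\varphi_j(\bigvee_i x_i)=\Pos_{L_j}\big(j(\bigvee_i x_i)\big)$; since $j$ is a nucleus, $j(\bigvee_i x_i)=j(\bigvee_i jx_i)$ is the join of the $jx_i$ computed in $L_j$, and the splitting property \eqref{eq splitting} of $\Pos_{L_j}$ then gives that this holds iff $\Pos_{L_j}(jx_i)$ for some $i$, which is exactly $\bigvee_i\varphi_j(x_i)$ in $\Omega$.

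Conversely, given a join-preserving map $\psi:L\to\Omega$, I would recover a sublocale. The natural guess, using Remark \ref{remark f to Omega} in reverse, is that $\psi$ need not be symmetrizable on the nose; but since $\psi$ lands in $\Omega$ and preserves joins, it has a right adjoint, and I would set $a=\psi^{-1}(\text{the relevant element})$ more carefully: the cleanest route is to let $P=\{x\in L\mid \psi(x)=1\}$-flavoured data determine a nucleus. Since a join-preserving $\psi:L\to\Omega$ is the same as specifying an element $a\in L$ via $a=\bigvee\{x\mid \psi(x)\}$ when $L$ is atomic, but for general $L$ I would instead use that join-preserving maps $L\to\Omega$ correspond to closed-under-join-downsets, equivalently to nuclei of a prescribed shape. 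The construction I would use is: from $\psi$ form the open sublocale whose nucleus is $j(x)=a\to x$ for the element $a$ adjoint to $\psi$, and equip it with $\Pos_{L_j}=\Pos_L\circ\exists_{m^*}$ as in the preceding proposition; the preceding proposition guarantees this is an o-algebra.

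The key coherence step, and the part I expect to be most delicate, is matching these two constructions so that they are genuinely inverse \emph{and} so that every sublocale-o-algebra arises this way — not just the open ones. The earlier lemma shows overt sublocales of \emph{discrete} locales are open, but here $L$ is an arbitrary locale, so I must argue that a sublocale of $L$ which is an o-algebra is necessarily overt (immediate, since o-algebras are overt frames) and then that overtness of the sublocale forces it to be open over $L$, yielding the left adjoint $\exists_{m^*}$ needed to define the associated map $L\to\Omega$. The heart of the matter is thus an analogue, relative to $L$ rather than to a discrete locale, of the lemma ``every overt sublocale is open''; I would prove this by showing that for an overt sublocale with nucleus $j$, the element $a=\exists_{m^*}(0_{L_j})$-style witness, or rather the join of the positivity data, recovers $j$ in the open form $j(x)=a\to x$, exactly paralleling the computation $jU=P\to U$ in the discrete lemma but with $P$ replaced by the appropriate element of $L$.

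Once openness of the relevant sublocales is established, the bijection is essentially bookkeeping: $\psi\mapsto L_j\mapsto\varphi_j$ returns $\psi$ because $\varphi_j(x)=\Pos_L\exists_{m^*}(jx)=\Pos_L(x\wedge a)=\psi(x)$ by Frobenius and the characterization $a=\exists_{m^*}(1)$, while $L_j\mapsto\varphi_j\mapsto L_{j'}$ returns the same sublocale because the open nucleus is completely determined by the element $a$, which is recovered from $\varphi_j$. The main obstacle, to summarize, is not the round-trip computations but rather the structural claim that being a sublocale-which-is-an-o-algebra is equivalent to being an \emph{open} overt sublocale; proving that every such sublocale is open is where the real work lies.
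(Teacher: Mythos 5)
Your forward direction ($L_j\mapsto\varphi_j$ with $\varphi_j(x)=\Pos_{L_j}(jx)$, join-preservation via splitting of $\Pos_{L_j}$ over joins computed in $L_j$) is exactly the paper's. The converse, however, rests on a structural claim that is false: you propose to show that every sublocale of $L$ which is an o-algebra is an \emph{open} sublocale, and then to build the inverse by sending $\psi$ to an open nucleus $j(x)=a\to x$. The lemma you are trying to relativize (``every overt sublocale of a discrete locale is open'') genuinely uses discreteness and does not survive the passage to an arbitrary $L$. The paper's own Section~\ref{section non-atomic} supplies the counterexample: for an overt frame $L$, the set of $y$ with $y=\bigvee\{x\mid\forall z(\Pos(z\wedge x)\Rightarrow\Pos(z\wedge y))\}$ is a sublocale of $L$ and an o-algebra; classically this is the regular-open Booleanization (e.g.\ the regular opens of $\mathbb{R}$), a dense, atomless, non-open sublocale. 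Consistently with this, the paper records that Boolean sublocales are those with nucleus $j(x)=(x\to a)\to a$, which is not the open shape $a\to x$; your identification of the two is where the argument breaks, and with it the uniqueness/surjectivity half of the bijection (your recipe would only ever produce open sublocales).

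The paper's converse avoids openness entirely: given a join-preserving $\varphi:L\to\Omega$, it defines the nucleus directly by $jy=\bigvee\{x\in L\mid\forall z.[\varphi(z\wedge x)\Rightarrow\varphi(z\wedge y)]\}$ and equips $L_j$ with the overlap relation $jx\olap jy:\Leftrightarrow\varphi(x\wedge y)$, so the positivity predicate of the sublocale is read off from $\varphi$ rather than obtained as $\Pos_L\circ\exists_{m^*}$ along a left adjoint that need not exist. The round-trip identities then follow from $\varphi(x\wedge jy)\Leftrightarrow\varphi(x\wedge y)$ and $x\leq jy\Leftrightarrow\forall z(\varphi(z\wedge x)\Rightarrow\varphi(z\wedge y))$. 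You should replace your openness reduction with this construction; the rest of your bookkeeping is salvageable once the correct inverse map is in place.
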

\begin{proof}
Given an o-algebra $L_j$, put $\varphi(x)= \Pos(jx)=(jx\olap jx)$. Then $\varphi(\bigvee_i x_i)$ = $\Pos(j(\bigvee_i x_i))$ = $\Pos(\bigvee^{L_j}_i jx_i)$ = $\exists i.\Pos(jx_i)$ = $\exists i.\varphi(x_i)$.

Conversely, given $\varphi :L\to \Omega$, put $jy$ = $\bigvee\{x\in L\ |\ \forall z.[\varphi(z\wedge x)\Rightarrow\varphi(z\wedge y)]\}$. It is not difficult to check that $\varphi(x\wedge jy)$ iff $\varphi(x\wedge y)$, and that $x\leq jy$ iff $\varphi(z\wedge x)\Rightarrow\varphi(z\wedge y)$ for all $z$. Therefore $j$ is a nucleus, and $L_j$ is an o-algebra with $jx\olap jy$ if $\varphi(x\wedge y)$.
\end{proof}

\section*{Some remarks on predicativity}

In predicative foundations powersets are treated essentially as classes; actually, complete (semi)lattices are typically partially ordered classes rather than posets. As a consequence, the requirement \eqref{eq prop Pos} in the very definition of an overlap algebra appears problematic, as it may contain a universal quantification over a class. 

This problem can be often overcome by restricting one's attention to $\textit{set-based}$ overlap algebras. A $ \textit{base} $ $S$ for a suplattice (complete join-semilattice) $ (L , \leq )$ is a set-indexed family of generators:  $p = \bigvee \lbrace a\in S\vert a\leq p\rbrace$ for every $p$ in $L$. Of course, every o-algebra is set-based impredicatively.

For a set-based o-algebra condition \eqref{eq prop Pos} can be replaced by the following 
\begin{center}
$ (\forall a\in S) (\Pos(a\wedge x) \Rightarrow \Pos(a\wedge y)) \Longrightarrow x\leq y $
\end{center} where the universal quantifier ranges over a set now.



Much of the results about {\bf OA} presented here remain valid for the category of set-based o-algebras  within a predicative framework.

\bibliographystyle{plain}
\bibliography{bibliography}

\end{document}